\def\mathclap#1{\text{\hbox to 0pt{\hss$\mathsurround=0pt#1$\hss}}}
\newtheorem{assumption}{Assumption}
\newtheorem{proposition}{Proposition}
\newtheorem{definition}{Definition}
\newtheorem{remark}{Remark}
\newtheorem{corollary}{Corollary}
\newtheorem{lemma}{Lemma}
\newtheorem{theorem}{Theorem}
\newcommand{\st}{\text{s.t.}}
\newcommand{\Real}{\mathbb{R}}
\newcommand{\T}{\mathsf{T}}
\newcommand{\co}{\mathsf c}
\newcommand{\rank}{{\rm rank}}
\newcommand{\interior}{{\text{interior}}}
\newcommand{\close}{{\text{closure}}}
\newcommand{\base}{{\bf{e}}}
\newcommand{\clos}{{\text{closure}}}
\newcommand{\inte}{{\text{interior}}}
\newcommand{\alphavec}{\boldsymbol{\alpha}}
\newcommand{\betavec}{\boldsymbol{\beta}}
\newcommand{\Graph}{\mathcal{G}}
\newcommand{\Vertex}{\mathcal{V}}
\newcommand{\VertexG}{\mathcal{V}_{\rm{G}}}
\newcommand{\VertexL}{\mathcal{V}_{\rm{L}}}
\newcommand{\Edge}{\mathcal{E}}
\newcommand{\EdgeI}{\mathcal{E}^{\rm I}}
\newcommand{\EdgeII}{\mathcal{E}^{\rm II}}
\newcommand{\RVertex}{\mathcal{R}}
\newcommand{\nG}{N_{\rm{G}}}
\newcommand{\nL}{N_{\rm{L}}}
\newcommand{\nGL}{N}
\newcommand{\nE}{E}
\newcommand{\CM}{\mathbf{C}}
\newcommand{\B}{\mathbf{B}}
\newcommand{\sgen}{\mathbf{s}^g}
\newcommand{\sload}{\mathbf{s}^l}
\newcommand{\hatsload}{\mathbf{\hat{s}}^l}
\newcommand{\pf}{\mathbf{p}}
\newcommand{\genllim}{\underline{\mathbf{s}}^g}
\newcommand{\genulim}{\overline{\mathbf{s}}^g}
\newcommand{\ang}{\boldsymbol{\theta}}
\newcommand{\f}{{\bf f}}
\newcommand{\M}{{\bf M}}
\newcommand{\pflow}{{\bf p}}
\newcommand{\taueq}{\boldsymbol{\tau}}
\newcommand{\mup}{\boldsymbol{\mu}_+}
\newcommand{\mum}{\boldsymbol{\mu}_-}
\newcommand{\mupm}{\boldsymbol{\mu}}
\newcommand{\lambdap}{\boldsymbol{\lambda}_+}
\newcommand{\lambdam}{\boldsymbol{\lambda}_-}
\newcommand{\lambdapm}{\boldsymbol{\lambda}}
\newcommand{\setf}{\Omega_{\f}}
\newcommand{\setsl}{\Omega_{\sload}}
\newcommand{\setpara}{\Omega}
\newcommand{\para}{{\boldsymbol{\xi}}}
\newcommand{\setslr}{\widetilde{\Omega}_{\sload}}
\newcommand{\setparar}{\widetilde{\Omega}}
\newcommand{\setS}{\mathcal{S}}
\newcommand{\setT}{\mathcal{T}}
\newcommand{\setQ}{\mathcal{Q}}
\newcommand{\OPF}{\mathcal{OPF}}
\newcommand{\JM}{\mathbf{J}}
\newcommand{\setSgen}{\setS_{\rm G}}
\newcommand{\setSbra}{\setS_{\rm B}}
\newcommand{\dpnoise}{Y}
\newcommand{\dpnoises}{{\bf{Y}}}
\newcommand{\query}{\mathcal{M}}
\newcommand{\Prob}{\mathbb{P}}
\newcommand{\Lap}{\mathscr{L}}
\newcommand{\W}{\mathcal{W}}
\newcommand{\Data}{\mathcal{D}}
\newcommand{\data}{{\bf{d}}}
\newcommand{\loosegen}{\aleph}
\title{\LARGE \bf
Differential Privacy of Aggregated DC Optimal Power Flow Data}
\author{Fengyu Zhou, James Anderson, and Steven H. Low
\thanks{This work is funded by NSF grants CCF 1637598, ECCS 1619352, CNS 1545096, ARPA-E through grant DE-AR0000699 and the GRID DATA program, and DTRA through grant HDTRA 1-15-1-0003.}
\thanks{Fengyu Zhou is with the Department of Electrical Engineering, California Institute of Technology, Pasadena, CA, 91125. Email: {\tt\small f.zhou@caltech.edu}}%
\thanks{James Anderson is with the Department of Computing and Mathematical Sciences, California Institute of Technology, Pasadena, CA, 91125. Email: {\tt\small james@caltech.edu}}%
\thanks{Steven H. Low is with Department of Electrical Engineering and the Department of Computing and Mathematical Sciences, California Institute of Technology, Pasadena, CA, 91125. Email: {\tt\small slow@caltech.edu}}%
}
\begin{document}

\maketitle
\thispagestyle{empty}
\pagestyle{empty}

\begin{abstract}
We consider the problem of privately releasing  aggregated network statistics obtained from solving a DC optimal power flow (OPF) problem. It is shown that the mechanism that determines the noise distribution parameters are linked to the topology of the power system and the monotonicity of the network. We derive a measure of ``almost'' monotonicity and show how it can be used in conjunction with a linear program in order to release aggregated OPF data using the differential privacy framework.
\end{abstract}

\section{Introduction}
Realistic and publicly available power network models based on real data 
are important for the research community.  
One of the difficulties in developing such a model is that grid operators are
reluctant to disclose consumer data or any information that may be commercially sensitive. 
%
Differential privacy, first developed in \cite{Dwo2006calibrating,Dwo2008differential,Dwo2014algorithmic}, has been widely used to evaluate the privacy loss for individual users in a dataset. It has recently been used by the researchers in the  power systems community for use in applications such as distributed algorithms for EV charging \cite{Han2014differentially}, power system data release \cite{FioV2018constrained}, and load management \cite{Hal2017architecture}. 

In our work, we consider the differential privacy of power systems induced by an Optimal Power Flow (OPF) problem. In this context, the optimal generation can be viewed as a function of the loads. Typically generation data is publicly available. In contrast, load data can reveal consumer habits and other commercially sensitive information, and thus we aim to keep it private. We aim to prevent changes in generation data from disclosing sensitive load data.
Instead of proposing new mechanisms, for a given network we study how much noise is required to be added to the data in order to achieve a certain level of differential privacy for existing mechanisms such as the Laplace mechanism. We introduce the concept of $(\delta, \varepsilon)$-monotonicity, a metric that is central  to our differential privacy analysis. We also show how it is affected under different system topologies. Finally we present examples of three systems with different topologies and thus different monotonic characterizations, i.e., different $(\delta,\varepsilon)$ parameters. For each system we show that to preserve the same level of differential privacy, the required amount of noise implied by our theorem is very different for each example. We hope that such theoretical guarantees will not only guide the design of differentially private power systems, but also encourage greater data sharing and cooperation between grid operators and academia in the future.

We stress that the aim of this work is not to show that a linear program can be made differentially private. There are numerous results in this area, see for example~\cite{HsuRRU14, HsuHRW16, WaiJD12}. In the setting we consider, the grid operator will solve an appropriate optimization problem and will have access to \emph{all the data}. The results we provide will be based on using the Laplace mechanism to release this data privately. We note that there are other mechanisms available (e.g. the exponential and Gaussian mechanisms, as well as some that allow one to specify the support of a distribution) and indeed some may be better suited for this particular application. However, the Laplace mechanism is used in this paper as it most clearly links the key concepts of monotonicity, sensitivity, and topology and their relationship to privacy - this dependence has until now not been identified.
\section{Background}

\subsection*{Notation}
Vectors and matrices are typically written in bold while scalars are not. Given two vectors $\mathbf{a},\mathbf{b} \in \Real^n$, $\mathbf a\ge \mathbf b$ denotes the element-wise partial order $\mathbf{a}_i \ge \mathbf{b}_i$ for $i=1,\hdots,n$. For a scalar $k$, we define the projection operator $[k]^-:= \min \{0,k\}$. We define $\|\mathbf x\|_0$ as the number of non-zero elements of the vector $\mathbf x$. 
For $\mathbf X \in \Real^{n\times m}$, the restriction $\mathbf{X}_{\{1,3,5\}}$ denotes the $3\times m$  matrix composed of stacking rows $1,3$, and $5$ on top of each other. We will frequently use a set to describe the rows we wish to form the restriction from, in which case we assume the elements of the set are arranged in increasing order.
We will use $\base_{\scriptscriptstyle m}$ to denote the $m^{\text{th}}$ standard basis vector,  its dimension will be clear from the context. Finally, let $[m]:=\{1,2,\dots,m\}$ and $[n,m]:=\{n,n+1,\dots,m\}$.

\subsection{System Model}
Consider a power network modeled by an undirected graph $\Graph(\Vertex, \Edge)$, where $\Vertex:=\VertexG\cup\VertexL$ denotes the set of buses which can be further classified into generators in set $\VertexG$ and loads in set $\VertexL$, and $\Edge\subseteq \Vertex\times\Vertex$ is the set of all branches linking those buses. We will later use the terms (graph, vertex, edge) and (power network, bus, branch) interchangeably. Suppose $\VertexG\cap\VertexL=\emptyset$ and there are $|\VertexG|=:\nG$ generator and $|\VertexL|=:\nL$ loads, respectively.
For simplicity, let $\VertexG=[\nG]$, $\VertexL=[\nG+1,\nG+\nL]$.  Let $\nGL=\nG+\nL$. Without loss of generality, $\Graph$ is a connected graph with $|\Edge|=:\nE$ edges labelled as $1,2,\dots,\nE$.
Let $\CM\in\Real^{\nGL\times\nE}$ be the signed incidence matrix. 
Let $\B={\rm diag}(b_1,b_2,\dots,b_E)$, where $b_e>0$ is the susceptance for branch $e$. As we adopt a
DC power flow model, all branches are assumed lossless. Further, we denote the generation and load as $\sgen\in\Real^{\nG}$, $\sload\in\Real^{\nL}$, respectively.
Thus $\sgen_i$ refers to the generation on bus $i$ while $\sload_i$ refers to the load on bus $\nG+i$. We will refer to bus $\nG+i$ simply as load $i$ for simplicity. 
The power flow on branch $e\in\Edge$ is denoted as $\pf_e$, and $\pf:=[\pf_1,\dots,\pf_{\nE}]^{\T}\in\Real^\nE$ is the vector of all branch power flows.
The following assumption is made to simplify the analysis.
\begin{assumption}\label{A1}
There are no buses in the network that are both loads and generators. Formally,
$\VertexG\cap\VertexL=\emptyset$.
\end{assumption}

The above assumption is not restrictive under the lossless assumption in DC power flow.
We can always split a bus with both a generator and a load into a bus with
only the generator connected to another bus with only the load, and
connect all the neighbors of the original bus to that load bus.

\subsection{Optimal Power Flow}
We focus on the DC OPF problem with a linear cost function \cite{Woo2012power}. That is to say, the voltage magnitudes are assumed to be fixed and known. Without loss of generality, we assume all the voltage magnitudes to be $1$. The decision variables are the voltage angles denoted by vector $\ang\in\Real^{\nGL}$ and power generations $\sgen$, given loads $\sload$. The DC OPF takes the following form:
\begin{subequations}
\begin{eqnarray}
\underset{\sgen, \ang}{\text{minimize}}  && \f^{\T}\sgen
\label{eq:opf1.a}\\
\text{ subject to }& & \ang_1 = 0
\label{eq:opf1.b}\\
& &  \CM\B\CM^{\T} \ang = 
\left[ \begin{array}{c}
\sgen \\
-\sload
\end{array} \right] 
\label{eq:opf1.c}\\
& & \genllim \leq\sgen\leq \genulim
\label{eq:opf1.d}\\
& &  \underline{\pflow}\leq\B\CM^{\T}\ang\leq\overline{\pflow}.
\label{eq:opf1.e}
\end{eqnarray}
\label{eq:opf1}
\end{subequations}
Here, each entry of $\f\in\Real^{\nG}$ is the unit cost for a generator, and bus $1$ is selected as the slack bus with fixed voltage angle $0$. In \eqref{eq:opf1.c}, we let the injections for generators be positive while the injections for loads be the negation of $\sload$.
The upper and lower limits for the generation are set as $\genulim$ and $\genllim$, respectively, and $\overline{\pflow}$ and $\underline{\pflow}$ are the limits for branch power flow. 
We assume that~\eqref{eq:opf1} is well posed, i.e. $\genulim>\genllim$, $\overline{\pflow}>\underline{\pflow}$.

\subsection{Differential Privacy}
In this subsection, we introduce the concept of differential privacy as a method for evaluating the privacy status of a dataset. 
In general, a differentially private dataset can protect the privacy of each individual user by adding noise to  database queries such that the change in a single record cannot be effectively detected \cite{Dwo2006calibrating,Dwo2008differential,Dwo2014algorithmic}.
Suppose $\Data^n$ is the data space for $n$ users. Then a data element is $\data\in\Data^n$. A \emph{query} is a function $\tilde{\query}:\Data^n \rightarrow  \Real^r$. Examples include ``count" functions, e.g. return the number of records in the database where property $y$ holds ($r=1$). Other examples include statistical queries such as computing mean and variance. A mechanism  ${\query}:\Data^n \rightarrow  \Real^r$
is a randomized function of $\data$ which releases the result of the query combined  with an appropriately defined level of noise. For example,  a mechanism $\query$ can return the value
$\query(\data) := \tilde{\query}(\data) + \dpnoises$ for an appropriately chosen noise $\dpnoises$.


\begin{definition}[\cite{Dwo2006calibrating}]\label{Df:DP}
The mechanism $\query$ is said to preserve  $\varrho$-differential privacy if and only if $\forall \data',\data''\in\Data^n$ such that $\|\data'-\data''\|_0\leq1$, 
and $\forall \W\subseteq\Real^r$, we have
\begin{align*}
\Prob\{\query(\data')\in\W\}
\leq \exp(\varrho)\cdot\Prob\{\query(\data'')\in\W\}.
\end{align*}
\end{definition}

A mechanism $\query$
that satisfies the properties of Definition~\ref{Df:DP} ensures that the addition or removal of a single entry to the database does not change (much) the outcome of the query.

The Laplace mechanism is a popular choice relying on the symmetric Laplace distribution $\Lap(\cdot)$. For a random variable $X\sim \Lap(b)$ the probability density function is given by
\begin{equation*}
f_X(x|b) = \frac{1}{2b}\exp \left(\frac{-|x|}{b} \right),
\end{equation*} 
and $X$ has variance $\sigma^2 = 2b^2$. Intuitively, as $b$ increases, the distribution flattens and spreads symmetrically about the origin. The Laplace mechanism is defined
by  $\query(\data) := \tilde{\query}(\data)+\dpnoises$ where $Y_i\sim \Lap(\Delta_1/\varrho)$ are independent and identically distributed for $i=1,\hdots, r$ and $\Delta_1$ is the $L_1$-sensitivity of the query $\tilde{\query}$:
\begin{equation}\label{eq:L1sens}
\Delta_1 = \underset{\|\data'-\data''\|_0\leq1}{\text{maximize}} \quad \|\tilde{\query}(\data')-\tilde{\query}(\data'')\|_1.
\end{equation}
The following theorem explains the importance of the Laplace mechanism
~\cite{Dwo2006calibrating}:
\begin{theorem}\label{thm:lapDP}
For $\tilde{\query}:\Data ^n \rightarrow \Real^r$, the Laplace mechanism defined by
$\Lap({\Delta_1/\varrho})$ provides $\varrho$-differential privacy.
\end{theorem}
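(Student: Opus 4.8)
The plan is to establish a pointwise bound on the likelihood ratio of the mechanism's output densities under neighboring inputs, and then integrate this bound over an arbitrary measurable event. First I would fix two datasets $\data',\data''\in\Data^n$ with $\|\data'-\data''\|_0\le 1$ and write down the probability density functions $p'$ and $p''$ of the random vectors $\query(\data')=\tilde{\query}(\data')+\dpnoises$ and $\query(\data'')=\tilde{\query}(\data'')+\dpnoises$. Since the noise coordinates $Y_i\sim\Lap(\Delta_1/\varrho)$ are i.i.d.\ with scale $b=\Delta_1/\varrho$, each density factorizes into a product of $r$ one-dimensional Laplace densities, $p'(z)=\prod_{i=1}^r\frac{1}{2b}\exp(-|z_i-\tilde{\query}(\data')_i|/b)$, and similarly for $p''$ centered at $\tilde{\query}(\data'')$.

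Next I would bound $p'(z)/p''(z)$ pointwise for an arbitrary $z\in\Real^r$. The normalizing constants $1/(2b)$ cancel, leaving $\prod_{i=1}^r\exp\!\big(\tfrac{\varrho}{\Delta_1}(|z_i-\tilde{\query}(\data'')_i|-|z_i-\tilde{\query}(\data')_i|)\big)$. Applying the reverse triangle inequality coordinatewise gives $|z_i-\tilde{\query}(\data'')_i|-|z_i-\tilde{\query}(\data')_i|\le|\tilde{\query}(\data')_i-\tilde{\query}(\data'')_i|$, so the ratio is at most $\exp\!\big(\tfrac{\varrho}{\Delta_1}\|\tilde{\query}(\data')-\tilde{\query}(\data'')\|_1\big)$. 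By the definition \eqref{eq:L1sens} of the $L_1$-sensitivity, $\|\tilde{\query}(\data')-\tilde{\query}(\data'')\|_1\le\Delta_1$ whenever $\|\data'-\data''\|_0\le1$, hence $p'(z)\le e^{\varrho}\,p''(z)$ for every $z$.

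Finally I would integrate this inequality over an arbitrary measurable set $\W\subseteq\Real^r$ to get $\Prob\{\query(\data')\in\W\}=\int_{\W}p'(z)\,dz\le e^{\varrho}\int_{\W}p''(z)\,dz=e^{\varrho}\,\Prob\{\query(\data'')\in\W\}$, which is exactly the condition in Definition~\ref{Df:DP}. There is no real obstacle here: the only points that require a moment's care are the degenerate case $\Delta_1=0$ (the query is then constant across neighbors and the claim is immediate) and checking that the reverse triangle inequality is invoked in the correct direction so that the sensitivity of the pair $(\data',\data'')$ actually appears in the exponent; the rest is a direct computation.
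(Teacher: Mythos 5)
Your proof is correct, and it is the standard density-ratio argument: factor the product of Laplace densities, cancel the normalizing constants, apply the reverse triangle inequality coordinatewise, bound the resulting exponent by the $L_1$-sensitivity $\Delta_1$ from \eqref{eq:L1sens}, and integrate the pointwise bound over $\W$. The paper itself gives no proof of Theorem~\ref{thm:lapDP} --- it imports the result directly from \cite{Dwo2006calibrating} --- and your argument is essentially the canonical proof from that reference, including the correct handling of the degenerate case $\Delta_1=0$, so there is nothing to flag.
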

From the theorem and the definition of the Laplace distribution, it can be seen that for a fixed privacy level (specified by $\varrho$), as the sensitivity increases, the mechanism responds by adding noise drawn from a distribution of increasing variance. Fortunately, many queries of interest have low sensitivity;
e.g., counting queries and sum-separable functions have $\Delta_1=1$, 
\section{Preliminaries}
\subsection{OPF Operator}\label{sec:assump}

We now fix the topology and susceptances of the power network.
Let $\para:=[(\genulim)^{\T},(\genllim)^{\T},\overline{\pflow}^{\T},\underline{\pflow}^{\T}]^{\T}\in\Real^{2\nG+2\nE}$ be the vector of system limits.
Define
\begin{align*}
\setpara:=\{\para | \genllim\geq 0, \text{\eqref{eq:opf1.b}-\eqref{eq:opf1.e} are feasible for some }\sload>0\}.
\end{align*} 
{For each $\para\in\setpara$, define
\begin{align*}
\setsl(\para)&:=\{\sload | \sload>0, \text{\eqref{eq:opf1.b}-\eqref{eq:opf1.e} are feasible}\},\\
\setslr(\para)&:=\{\sload\!\!\in\!\setsl(\para)|\eqref{eq:opf1}~\text{has}~\nG\!-\!1
~\text{binding inequalities}\}.
\end{align*} 
Here $\setsl$ is convex and nonempty.
When we fix $\para$ and there is no confusion, we use $\setsl$ and $\setslr$ instead.

We now define the operator $\OPF$, which will be used throughout the rest of the paper.
\begin{definition}
Let the set valued operator $\OPF:\setsl\rightarrow 2^{\Real^{\nG}}$ be the mapping such that $\OPF(\bf{x})$ is the set of optimal solutions to \eqref{eq:opf1} with parameter $\sload=\bf{x}$.
\footnote{Here, $2^{\Real^\nG}$ indicates the power set of $\Real^\nG$.}
\end{definition}
We adopt the following assumption to simplify $\OPF$.
Fix $\B, \CM$ and $\para$, let $\setf$ be the set of $\f$ such that $\forall \sload\in\setsl$,
\begin{itemize}
\item \eqref{eq:opf1} has a unique solution;
\item the Lagrange multipliers of the KKT conditions (Appendix \ref{app:Assumption1}, eq.\eqref{eq:KKT}) satisfy 
\begin{align}\label{eq:multiplier1}
\|\mup\|_0+\|\mum\|_0+\|\lambdap\|_0+\|\lambdam\|_0\geq \nG-1.
\end{align}
\end{itemize}
\begin{assumption}\label{A:vectorf}
The objective vector $\f$ is in $\setf$, i.e, $\f$  always guarantees the uniqueness of the solution to \eqref{eq:opf1} for all $\sload\in\setsl$.
\end{assumption}
The motivation  for Assumption \ref{A:vectorf} is technical and deferred to the Appendix.
\begin{remark}
Under Assumption \ref{A:vectorf}, the value of $\OPF$ is always a singleton, 
so we can consider $\OPF({\bf x})$ as a function mapping ${\bf x}$ to the unique optimal solution of \eqref{eq:opf1} with parameter $\sload=\bf{x}$.
Since the solution set to the parametric linear program is both upper and lower hemi-continuous \cite{Zha1990note}, $\OPF$ is continuous.
\end{remark}
\begin{remark}
Intuitively, $\setpara$ and $\setsl$ contain the parameters that make \eqref{eq:opf1} feasible,
while $\setslr$ and $\setf$ also provide $\OPF$ with good properties such as uniqueness and differentiability.
\end{remark}

\subsection{System Monotonicity}
System monotonicity characterizes how the optimal 
generation reacts to a change in load.  It sheds lights on the {$L_1$-sensitivity}.  
\begin{definition}\label{Def:monotonicity}
A power system is said to be \it{monotone} if 
$\forall\alphavec,\betavec\in\setsl$ such that $\alphavec\geq\betavec$  and $\|\alphavec-\betavec\|_0=1$,
we have $\OPF(\alphavec)\geq\OPF(\betavec)$.
\end{definition}
In the DC power flow model, $\sum_i\OPF_i(\alphavec)=\sum_j\alphavec_j\geq\sum_j\betavec_j=\sum_i\OPF_i(\betavec)$,
i.e., the total generation to meet demand $\alphavec$ is greater than or equal to the total generation to meet demand $\betavec$, but the equalities in Definition \ref{Def:monotonicity} are stronger. 
They are element-wise, i.e., a system is monotone if \emph{all} generations will 
 increase or remain unchanged when any single load increases.
 This is often too stringent a requirement.  We are interested in approximately
 monotone systems, formalized in the following definition.
\begin{definition}\label{Def:monotonicity2}
For $\delta>0,\varepsilon\geq 0$, a power system is said to be \it{$(\delta,\varepsilon)$-monotone} if 
$\forall\alphavec,\betavec\in\setsl$ such that $\betavec+\delta\cdot\mathbf{1}\geq\alphavec\geq\betavec$ and $\|\alphavec-\betavec\|_0=1$,
we have $\sum_{i=1}^{\nG}[\OPF_i(\alphavec)-\OPF_i(\betavec)]^-\geq-\varepsilon$.
We refer to $(\delta,\varepsilon)$ as a monotonicity pair.
\end{definition}

By definition, a monotone system  is always $(\delta,0)$-monotone for any positive $\delta$. In the next subsections, we will study the $\OPF$ derivative and then relate it to monotonicity.

\begin{figure}
\centering
\includegraphics[width=0.8\columnwidth]{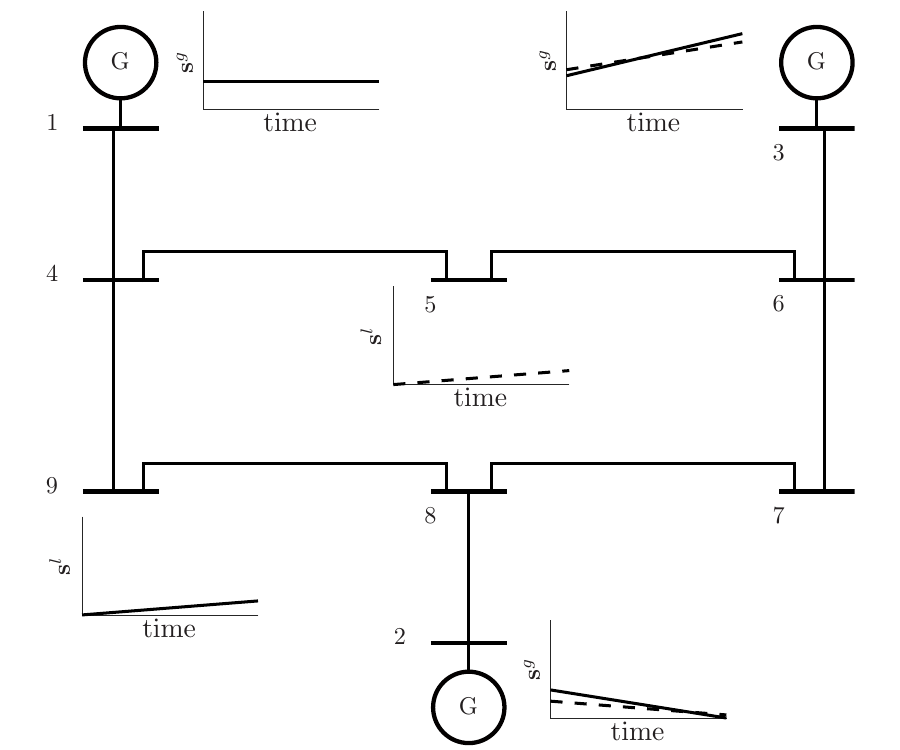}
\caption{IEEE 9-bus case. Dashed and solid curves show how the optimal generations 
change as loads on bus 5 and bus 9 increase. Bus 1 has constant generation 
since its generation has reached its upper limit.}
\vspace{-1em}
\label{fig:mono_case9}
\end{figure}

Here we use the IEEE 9-bus testcase as an example to illustrate the concept of monotonicity.
As shown in Figure \ref{fig:mono_case9}, increasing the load on either bus 5 (dashed curves) or bus 9 (solid curves) 
will lead to production decrease in generator 2. 
Thereby, IEEE 9-bus testcase is not monotone.
A more careful analysis shows that for any $\delta>0$, the system is actually $(\delta,2.01\delta)$-monotone, meaning the total decrease in the optimal generation will not exceed 2.01 times the increase in the load.

\subsection{Determining Monotonicity}
Monotonicity as in Definition~\ref{Def:monotonicity} does not hold for general networks. 
In this subsection we characterize topologies that are monotone.
In particular, we show that radial networks are monotone.

An equivalent definition of monotonicity is that the derivative\footnote{We adopt the following notation: $\partial_{\sload}\OPF(\sload) :=\frac{\partial \OPF (\sload)}{\partial \sload} = \frac{\partial \sgen}{\partial \sload}$, which has dimension $\nG \times \nL$.} 
$\partial_{\sload}\OPF(\sload)$ of the corresponding $\OPF$ operator is element-wise nonnegative (when it exists).
Let $\setSgen(\sload)$ and $\setSbra(\sload)$ denote the set of generators and branches that are 
binding, respectively, for a given $\sload$, i.e.
\begin{align*}
&\setSgen(\sload):=\{i\in\VertexG:\sgen_i\in\{\genllim_i,\genulim_i\},\\
&\setSbra(\sload):=\{e\in\Edge:\pf_e\in\{\underline{\pflow}_e,\overline{\pflow}_e\}\}.
\end{align*}
When there is no danger of confusion, we will write $\setSgen$ and $\setSbra$ for simplicity.
\begin{assumption}\label{A:derivative}
The set $\setslr$ is dense in $\setsl$.
For $\sload\in\setslr$, the derivative
$\partial_{\sload}\OPF(\sload)$ always exists, and the sets $\setSgen$ and $\setSbra$ do not change in a neighborhood of $\sload$.
\end{assumption}

We show in Appendix \ref{app:Assumption_der} that Assumption \ref{A:derivative} is mild.

Returning to the graph $\Graph(\Vertex,\Edge)$, we divide $\Edge$ into two disjoint sets:
\begin{align}
\nonumber
&\EdgeI:=\{e\in\Edge~|~\Graph(\Vertex,\Edge\backslash \{e\})~\text{is not connected}\}\\
\nonumber
&\EdgeII:=\Edge\backslash \EdgeI.
\end{align} 
Links in $\EdgeI$ are called bridges in $\Graph$. 
In general, it is possible that $\EdgeI = \emptyset$,
e.g., when $\Graph$ is a ring. 
The next result connects monotonicity to network topology.
\begin{theorem}\label{Thm:treeTopl}
For any $\sload\in\setslr$ such that $\setSbra(\sload)\subseteq\EdgeI$, we have $\partial_{\sload}\OPF(\sload)\geq 0$, i.e., the system is monotone.
\end{theorem}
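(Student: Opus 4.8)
The plan is to compute $\partial_{\sload}\OPF(\sload)$ directly by fixing the active set and exploiting the bridge hypothesis. By Assumption~\ref{A:derivative} the binding sets $\setSgen$ and $\setSbra$ are constant in a neighborhood of $\sload$, so under any small feasible perturbation of the load the generators in $\setSgen$ stay pinned at their (constant) limits and the flows on the branches in $\setSbra$ stay pinned at theirs. Since $\sload\in\setslr$ there are exactly $\nG-1$ binding inequalities of \eqref{eq:opf1}, and because each binding generator contributes one active limit (using $\genulim>\genllim$) and each binding branch one active limit (using $\overline{\pflow}>\underline{\pflow}$), we get $|\setSgen|+|\setSbra|=\nG-1$. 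Hence the number of ``free'' generators — those not in $\setSgen$ — equals $\nG-|\setSgen|=|\setSbra|+1$.

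Next I would use the hypothesis $\setSbra\subseteq\EdgeI$: every binding branch is a bridge of $\Graph$. Deleting the $|\setSbra|$ binding bridges from the connected graph $\Graph$ increases the number of connected components by exactly one per deletion, yielding exactly $m:=|\setSbra|+1$ components $G_1,\dots,G_m$. The structural fact I would invoke is that the flow on a bridge equals the signed sum of the nodal injections on either side of it: summing the balance equation $\CM\B\CM^\T\ang=[\sgen;-\sload]$ over the vertices on one side, every edge internal to that side cancels in the signed incidence matrix and only the bridge crosses. Consequently, for each component $G_j$ the aggregate injection $\sum_{i\in\VertexG\cap G_j}\sgen_i-\sum_{\text{loads in }G_j}\sload_i$ equals a signed sum of the (constant) binding-bridge flows, hence is itself locally constant. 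Thus, within each $G_j$, the total output of the free generators equals the total load in $G_j$ plus a constant.

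The crux is to show that each component contains exactly one free generator. ``At most one'' follows from optimality together with the uniqueness Assumption~\ref{A:vectorf}: if some $G_j$ contained two free generators $a,b$, one could shift an infinitesimal amount of generation from $a$ to $b$; this perturbation is feasible for small shifts because $G_j$ contains no binding branch, all generator/branch limits outside $\setSgen\cup\setSbra$ hold strictly, and the aggregate injection of $G_j$ (hence every bridge flow) is unchanged. Such a shift would strictly decrease $\f^\T\sgen$ if $\f_a\neq\f_b$, or exhibit a continuum of optimizers if $\f_a=\f_b$, contradicting optimality or uniqueness respectively. Combining ``at most one per component'' with the count of exactly $m$ free generators among $m$ components forces exactly one free generator $g(j)$ in each $G_j$. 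Then $\sgen_{g(j)}=\sum_{\text{loads in }G_j}\sload_i+\text{const}$ while every other component of $\sgen$ is locally constant, so $\partial_{\sload}\OPF(\sload)$ is a $0$–$1$ matrix — its $(g(j),k)$ entry is $1$ precisely when load $k$ lies in $G_j$ — and in particular $\partial_{\sload}\OPF(\sload)\geq 0$. Monotonicity of the system then follows since $\setslr$ is dense in $\setsl$ (Assumption~\ref{A:derivative}) and $\OPF$ is continuous.

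I expect the main obstacle to be the ``exactly one free generator per component'' step: making the generation-exchange argument watertight requires carefully verifying that the perturbed point is feasible (all within-component branches strictly non-binding, all off-active-set limits strict, bridge flows unaffected), and this is precisely where Assumptions~\ref{A:vectorf} and~\ref{A:derivative} are really used. The accounting that the number of binding inequalities, the number of components after deleting the binding bridges, and the number of free generators all line up is routine but must be carried out against the definition of $\setslr$.
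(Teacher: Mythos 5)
Your proof is correct and shares the paper's skeleton: delete the binding bridges to obtain $|\setSbra|+1$ components, count $|\setSbra|+1$ free generators from the definition of $\setslr$, show each component holds exactly one free generator, and read off that $\partial_{\sload}\OPF$ is a $0$--$1$ matrix. The one place you genuinely diverge is the proof of the ``exactly one free generator per component'' lemma. The paper rules out a component with \emph{no} free generator: summing the nodal balance over such a component would express a linear dependence among the $\nG-1$ binding inequalities and the equality constraints, contradicting Corollary~\ref{Co:independent} (which rests on the genericity of $\para$ via Proposition~\ref{limitsdense} and Assumption~\ref{A:limits}). You instead rule out a component with \emph{two} free generators via a primal exchange: shift $\epsilon$ of generation between them, verify feasibility (both strictly interior to their limits, no binding branch inside the component, binding-bridge flows fixed because the side-aggregate injections are unchanged), and contradict either optimality (if $\f_a\neq\f_b$) or uniqueness (if $\f_a=\f_b$, Assumption~\ref{A:vectorf}). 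Both arguments then finish by the same pigeonhole count. Your route is more elementary LP reasoning and bypasses Corollary~\ref{Co:independent} entirely, at the price of the feasibility bookkeeping you already flag; the paper's route is purely linear-algebraic but leans on the constraint-genericity machinery of the appendix. One small imprecision to fix: under your exchange only the \emph{binding} bridge flows are exactly unchanged --- a non-binding bridge lying inside $G_j$ between the two free generators sees its flow move by $\pm\epsilon$ --- but since such branches are strictly interior this does not affect feasibility for small $\epsilon$, so the argument stands.
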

\begin{proof}
See Appendix \ref{app:proof_of_Thm_4}.
\end{proof}

Thereom \ref{Thm:treeTopl} directly implies  the following corollaries.

\begin{corollary}\label{Co:treeMono}
Power networks whose graphs $\Graph$ are trees are monotone.
\end{corollary}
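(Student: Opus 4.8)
The strategy is to reduce the statement about trees to a direct application of Theorem~\ref{Thm:treeTopl}. The key observation is that in a tree, \emph{every} edge is a bridge: removing any edge $e$ from a tree disconnects it, since a tree on $\nGL$ vertices has exactly $\nGL-1$ edges and is connected, so it contains no cycles and every edge is a cut edge. Hence $\EdgeI = \Edge$ and $\EdgeII = \emptyset$ whenever $\Graph$ is a tree.

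Given this, the argument is short. Fix a tree network $\Graph$ and let $\sload \in \setslr$ be arbitrary. By definition, $\setSbra(\sload) \subseteq \Edge = \EdgeI$, so the hypothesis of Theorem~\ref{Thm:treeTopl} is satisfied trivially, and we conclude $\partial_{\sload}\OPF(\sload) \ge 0$ for every $\sload \in \setslr$. By Assumption~\ref{A:derivative}, $\setslr$ is dense in $\setsl$, and by the remark following Assumption~\ref{A:vectorf} the operator $\OPF$ is continuous on $\setsl$; therefore the inequality $\partial_{\sload}\OPF \ge 0$ on the dense set $\setslr$ extends (via the mean value / fundamental theorem of calculus along segments in the convex set $\setsl$) to monotonicity of $\OPF$ in the sense of Definition~\ref{Def:monotonicity}. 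Concretely, for $\alphavec \ge \betavec$ with $\|\alphavec - \betavec\|_0 = 1$, one writes $\OPF(\alphavec) - \OPF(\betavec)$ as an integral of $\partial_{\sload}\OPF$ along the segment from $\betavec$ to $\alphavec$, which lies in $\setsl$ by convexity, and the integrand is nonnegative on a dense (hence full-measure, given the piecewise structure from Assumption~\ref{A:derivative}) subset of the segment.

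I expect the only mildly delicate point to be the passage from "derivative nonnegative on the dense set $\setslr$" to "$\OPF$ is monotone everywhere on $\setsl$": one must ensure the segment joining $\betavec$ and $\alphavec$ meets $\setsl \setminus \setslr$ only in a negligible set, or else invoke continuity of $\OPF$ together with the fact that $\setslr$ is dense to approximate $\alphavec, \betavec$ by points near which the derivative exists. Both routes are standard, and the piecewise-affine structure of $\OPF$ (it is the solution map of a parametric LP) makes this rigorous; I would simply cite Assumption~\ref{A:derivative} and the continuity remark rather than belabor it. Everything else is immediate once the "every edge of a tree is a bridge" fact is stated.
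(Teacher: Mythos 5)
Your proof is correct and follows the same route the paper intends: since every edge of a tree is a bridge, $\EdgeI=\Edge$, so $\setSbra(\sload)\subseteq\EdgeI$ holds automatically and Theorem~\ref{Thm:treeTopl} applies. The paper simply states that the corollary is ``directly implied''; your additional care in passing from $\partial_{\sload}\OPF\geq 0$ on the dense set $\setslr$ to monotonicity on all of $\setsl$ (via continuity of $\OPF$ and integration along segments) fills in a step the paper leaves implicit, but does not constitute a different approach.
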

\begin{corollary}
If all the possible branch flow bottlenecks\footnote{We define a bottleneck to be any edge $e\in\Edge$ such that $\exists \sload\in\setslr$ where the optimal power flow $\pf_e \in \{{\underline{\pflow}}_e ,{\overline{\pflow}}_e\}$.} in the power system are in $\EdgeI$, then the system is monotone.
\end{corollary}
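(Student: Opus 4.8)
The plan is to compute, for each load $m$, the one-sided directional derivative of $\OPF$ in the direction $\base_m$ directly, using Assumption~\ref{A:derivative} to freeze the active sets. Fix $\sload\in\setslr$ with $\setSbra(\sload)\subseteq\EdgeI$: in a neighborhood the binding generators $\setSgen$ stay pinned at their limits, the binding branches $\setSbra$ carry flows equal to their (fixed) limits, and $|\setSgen|+|\setSbra|=\nG-1$ since $\sload\in\setslr$. First I would record a graph-theoretic fact: the bridges of a connected graph form a forest, and removing any set of $k$ of them yields exactly $k+1$ connected components (induction, using that a bridge of $\Graph$ which lies inside a component of $\Graph$ minus another bridge is itself a bridge of that component). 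Applying this to $\setSbra\subseteq\EdgeI$ splits $\Graph$ into $K:=|\setSbra|+1$ components $\Graph_1,\dots,\Graph_K$, and every edge joining two distinct components belongs to $\setSbra$, hence carries a flow that is constant throughout the neighborhood. Therefore the net power injected into each $\Graph_j$ through its incident bridges is a constant $C_j$, and nodal balance over $\Graph_j$ gives $\sum_{i\in\Graph_j\cap\VertexG}\sgen_i = C_j + \sum_{i\in\Graph_j\cap\VertexL}\sload_i$.

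Next I would show that the OPF decouples across the $\Graph_j$. Appending to~\eqref{eq:opf1} the already-satisfied equalities fixing $\pf_e$ for $e\in\setSbra$ at their current values leaves the optimum unchanged, and the resulting program separates into one subproblem per component; inside $\Graph_j$ no branch-flow constraint binds (all bridges are external), so the subproblem collapses to the merit-order dispatch $\min \sum_{i\in\Graph_j\cap\VertexG}\f_i\sgen_i$ subject to $\sum_i \sgen_i = C_j + \sum_i\sload_i$ and $\genllim_i\le\sgen_i\le\genulim_i$ --- the DC flow equations only recover $\ang$ and add no restriction once branch limits are slack. In such a dispatch, two generators both strictly interior to their bounds would let one transfer power between them at zero first-order cost, contradicting uniqueness (Assumption~\ref{A:vectorf}); hence each $\Graph_j$ has at most one non-binding (``marginal'') generator. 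Since $|\setSgen|=\nG-1-|\setSbra|=\nG-K$, exactly $K$ generators are non-binding in total, so each of the $K$ components has exactly one marginal generator and all of its other generators are fixed at a limit.

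Finally I would perturb: increasing $\sload_m$ by a small $d>0$ raises $C_j+\sum_i\sload_i$ by $d$ in the unique component $\Graph_{j_0}$ containing bus $\nG+m$ and leaves it unchanged in every other component. Because the active sets persist on a neighborhood (Assumption~\ref{A:derivative}), the marginal generator of $\Graph_{j_0}$ absorbs the entire increment $d$ while every other generator stays at its limit, and nothing moves in the other components. Hence $\OPF(\sload+d\base_m)\ge\OPF(\sload)$ entrywise; dividing by $d$ and letting $d\downarrow 0$ gives a nonnegative $m$-th column of $\partial_{\sload}\OPF(\sload)$, and ranging over $m$ yields $\partial_{\sload}\OPF(\sload)\ge 0$. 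Monotonicity in the sense of Definition~\ref{Def:monotonicity} then follows by integrating this nonnegative derivative along the segment from $\betavec$ to $\alphavec$ (using that $\OPF$ is continuous and piecewise affine and that $\setslr$ is dense in $\setsl$), since $\alphavec-\betavec\ge 0$.

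I expect the crux to be the decoupling-and-counting step: that the globally unique optimum restricts to the per-component merit-order optima, that uniqueness forbids two marginal generators in one component, and that the pigeonhole match between the $K$ components and the $K$ non-binding generators is exact. The bridge-removal lemma, and the bookkeeping that every ``fixed at a limit'' statement survives on a full neighborhood via Assumption~\ref{A:derivative}, are the remaining spots requiring care.
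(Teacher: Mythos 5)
Your argument is essentially the paper's own proof of Theorem~\ref{Thm:treeTopl} (from which this corollary follows): freeze the active sets via Assumption~\ref{A:derivative}, cut the graph along the binding bridges into $|\setSbra|+1$ components, show each component contains exactly one non-saturated generator, and observe that this generator absorbs the entire load increment in its own component while everything else is pinned. The one genuinely different ingredient is how you certify ``exactly one marginal generator per component.'' The paper rules out a component with \emph{zero} marginal generators: its nodal balance would then be a linear relation among the $\nG-1$ binding inequalities and the equality constraints, contradicting Corollary~\ref{Co:independent} (which rests on the genericity of $\para$ from Proposition~\ref{limitsdense}). You instead rule out a component with \emph{two} marginal generators via the decoupled merit-order dispatch and Assumption~\ref{A:vectorf}; given the count $|[\nG]\setminus\setSgen|=|\setSbra|+1$, the two exclusions are equivalent by pigeonhole. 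Your route is more self-contained (it does not invoke the independence corollary), at the cost of having to justify the local decoupling of the OPF across components. Two small points to tighten: (i) if the two interior generators have unequal costs the $\epsilon$-transfer strictly \emph{decreases} the objective, so the contradiction is with optimality rather than with uniqueness at ``zero first-order cost''---state both cases; (ii) your final integration step, passing from $\partial_{\sload}\OPF\ge 0$ on the dense set $\setslr$ to Definition~\ref{Def:monotonicity} via continuity of $\OPF$, is a step the paper leaves implicit, and it is good that you made it explicit.
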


In general, when the cycles in the graph are not adjacent to each other, the monotonicity pair $(\delta,\varepsilon)$ can be efficiently estimated. The algorithm and its proof will be presented in the journal version of this paper. 


\section{OPF Privacy}
\subsection{Motivation and Definition}
Ideally both the generations $\sgen$ and loads $\sload$ are  available 
for the research community to build realistic power system models from.
However, load data may contain sensitive information, and hence it is desirable 
to preserve the privacy of $\sload$.

Suppose $\query(\sgen,\sload)$ is a (randomized) function of $(\sgen,\sload)$, and acts as the mechanism of the data. 
It is reasonable to assume that $\sgen$ is always chosen as the
unique optimal solution to the OPF problem, i.e., $\sgen=\OPF(\sload)$.
Then we can write $\query(\sgen,\sload)$ as $\query(\OPF(\sload),\sload)$.
For simplicity, we denote it as $\query(\sload)$.
The privacy problem is to design a mechanism that hides individual load 
changes
when the database containing the vectors $\sgen, \sload$ is queried. 
We let  $\Delta$ denote the changes to an individual load, 
i.e.,  $\sload_i \leftarrow \sload_i \pm\Delta$ for some  $\Delta>0$.
To address this problem, we introduce a modified version of differential privacy: 
\begin{definition}
For $\Delta,\varrho>0$, the mechanism $\query$ preserves $(\Delta,\varrho)$-differential privacy 
\footnote{The definition of $(\Delta,\varrho)$-differential privacy in this paper is different from the standard definition used in \cite{Dwo2014algorithmic}. In particular, the second parameter does not refer to an additive term in Definition~\ref{Df:DP}, but rather a bound on the $\ell_1$-sensitivity of the loads.}
if and only if $\forall (\sload)',(\sload)''$ such that $\|(\sload)'-(\sload)''\|_0\leq1$ and $\|(\sload)'-(\sload)''\|_1\leq\Delta$, 
and $\forall \W\subseteq\Real^r$, we have
\begin{align*}
\nonumber
\Prob\{\query((\sload)')\in\W\}
\leq \exp(\varrho)\cdot\Prob\{\query((\sload)'')\in\W\}.
\end{align*}
\end{definition}

Theorem~\ref{thm:lapDP} can be readily extended to our $(\Delta,\varrho)$-differential privacy.
\begin{lemma}\label{lm:dp}
Let $\tilde{\query}(\sgen,\sload)$ be a deterministic query. The mechanism $\query=\tilde{\query}+\dpnoises$, with $Y_i$ drawn i.i.d.  from $\Lap(\Delta_1/\varrho)$, preserves $(\Delta,\varrho)$-differential privacy if for any $(\sload)',(\sload)''$ such that $\|(\sload)'-(\sload)''\|_0\leq1$ and $\|(\sload)'-(\sload)''\|_1\leq\Delta$,  $\Delta_1$ satisfies
$ \|\tilde{\query}((\sload)')-\tilde{\query}((\sload)'')\|_1\leq \Delta_1.$
\end{lemma}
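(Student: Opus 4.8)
The plan is to reproduce the classical proof of Theorem~\ref{thm:lapDP} almost verbatim, exploiting the observation that that proof never uses anything beyond a sensitivity bound across \emph{neighboring} inputs. Replacing the global $\ell_1$-sensitivity of \eqref{eq:L1sens} by the (possibly smaller) quantity $\Delta_1$ assumed in the hypothesis — which is only required to dominate $\|\tilde{\query}((\sload)')-\tilde{\query}((\sload)'')\|_1$ over the pairs of load vectors appearing in the definition of $(\Delta,\varrho)$-differential privacy — will therefore leave the argument intact.

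First I would fix a pair $(\sload)',(\sload)''$ with $\|(\sload)'-(\sload)''\|_0\le 1$ and $\|(\sload)'-(\sload)''\|_1\le\Delta$, and abbreviate $a:=\tilde{\query}((\sload)')$ and $b:=\tilde{\query}((\sload)'')$ (here $\tilde{\query}((\sload))$ stands for $\tilde{\query}(\OPF((\sload)),\sload)$, but this composition plays no role beyond relabeling). Since $\query=\tilde{\query}+\dpnoises$ with the $Y_i$ i.i.d.\ $\Lap(\Delta_1/\varrho)$ and $\tilde{\query}$ deterministic, the law of $\query((\sload)')$ has density $p'(z)=\prod_{i=1}^{r}\frac{\varrho}{2\Delta_1}\exp(-\varrho|z_i-a_i|/\Delta_1)$ on $\Real^r$, and $\query((\sload)'')$ has the analogous density $p''$ with $b$ in place of $a$.

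Next I would bound the pointwise likelihood ratio. Dividing the two densities and applying the reverse triangle inequality $|z_i-b_i|-|z_i-a_i|\le|a_i-b_i|$ coordinatewise gives
\[
\frac{p'(z)}{p''(z)}=\prod_{i=1}^{r}\exp\!\Big(\tfrac{\varrho}{\Delta_1}\big(|z_i-b_i|-|z_i-a_i|\big)\Big)\le\exp\!\Big(\tfrac{\varrho}{\Delta_1}\sum_{i=1}^{r}|a_i-b_i|\Big)=\exp\!\Big(\tfrac{\varrho}{\Delta_1}\|\tilde{\query}((\sload)')-\tilde{\query}((\sload)'')\|_1\Big),
\]
and the hypothesis $\|\tilde{\query}((\sload)')-\tilde{\query}((\sload)'')\|_1\le\Delta_1$ yields $p'(z)\le e^{\varrho}p''(z)$ for every $z\in\Real^r$. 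Integrating this inequality over an arbitrary measurable $\W\subseteq\Real^r$ then gives $\Prob\{\query((\sload)')\in\W\}=\int_{\W}p'(z)\,dz\le e^{\varrho}\int_{\W}p''(z)\,dz=e^{\varrho}\,\Prob\{\query((\sload)'')\in\W\}$, which is exactly the defining inequality of $(\Delta,\varrho)$-differential privacy.

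I do not expect a genuine obstacle: the computation is routine once one notes that the Laplace density is strictly positive everywhere, so the pointwise ratio bound integrates without difficulty. The only point deserving care is the bookkeeping — verifying that the class of load perturbations quantified over in the definition of $(\Delta,\varrho)$-differential privacy (a single coordinate changing by at most $\Delta$ in $\ell_1$) is precisely the class over which the lemma's hypothesis supplies the constant $\Delta_1$ — and this matching is exactly what makes the modified definition the natural one and lets the classical proof carry over unchanged.
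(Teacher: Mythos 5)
Your proof is correct and is exactly the argument the paper relies on: the paper gives no separate proof of Lemma~\ref{lm:dp}, simply noting it is the classical Laplace-mechanism bound of Theorem~\ref{thm:lapDP} carried over to the restricted neighborhood relation, which is precisely the density-ratio computation you write out. The only substantive observation — that the classical proof uses nothing beyond a sensitivity bound over the quantified pairs of inputs — is the same one the authors implicitly make.
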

\subsection{Queries for Power Systems}
We investigated a few commonly used statistics for power systems provided by U.S. Energy Information Administration (EIA) \cite{EIAdata} and  
French transmission system operator (RTE) \cite{RTEdata}.
Here we list a few of them and view them as the potential queries for power system data.
\begin{itemize}
\item Regional aggregated generation and load: total generation or load within a region regulated by each grid operator.
\item Power generation by energy source: total generation provided by each individual source of energy such as solar or wind.
\item Inter-regional flows: power traded among different regions.
\end{itemize}

Most of those statistics can be regarded as some linear functions of the generation $\sgen$ and load $\sload$.
In the next subsection, we will focus on the example of an aggregation query, which is a generalized model for many statistics listed above.

\subsection{Aggregation Query}
In \cite{And2018disaggregation}, we propose a method to release 
load and generation data in a way that attempts to strike a balance between 
the privacy of data owners and the need of the research community for realistic 
samples.  The method consists of two steps.  
First, instead of $\sgen$ and $\sload$, the data owner releases their
aggregations over discrete regions of the network.
Second, a disaggregation algorithm is used to estimate 
the loads and generations based on the released aggregated data.
 In this section, we study how differential privacy is preserved for the aggregation query.
 See \cite{FioV2018constrained} for another approach.

Suppose the buses in $\Vertex$ are partitioned into $r$ regions 
$\RVertex_1,\RVertex_2 \dots ,\RVertex_r$, where $\RVertex_i\subseteq \Vertex$ 
is the set of bus IDs in region $i$.
Let the aggregation query for region $i$ be
\begin{align*}
\tilde{\query}_i^g = \sum\limits_{j\in \RVertex_i}{\sgen}_j,\quad
\tilde{\query}_i^l = \sum\limits_{j+\nG\in \RVertex_i}{\sload}_j,
\end{align*}
The system operator discloses a noisy version of the aggregation query, denoted as $\query_i^g=\tilde{\query}_i^g+Y_i^g$ and $\query_i^l=\tilde{\query}_i^l+Y_i^l$. 
Here, $\dpnoise_i^g$ and $\dpnoise_i^l$ are 
 independent random variables and are intentionally added to ensure privacy.
Let 
\begin{align}\label{eq:query}
\query(\sgen,\sload)=[\query_1^g,\dots,\query_r^g,\query_1^l,\dots,\query_r^l]^{\T}
\end{align}
be the Laplace mechanism for this aggregation query.
Since the support of Laplace distribution is unbounded, there is a chance that the mechanism will change the signs of the query and make the output data unrealistic.
In practice, one can easily use the exponential mechanism to solve this issue by defining a quality function which penalizes the data with wrong signs \cite{McS2007mechanism}.
In this paper we will not provide the details as space is limited and our primary motivation is to show how system monotonicity, sensitivity, and topology are related to the data privacy via the Laplace mechanism.
We will see in Section \ref{sec:Simulation} that networks that are likely to encounter sign errors tend to be far from monotone, 
in which case it is hard to preserve both the privacy and data quality no matter which mechanism is applied due to high sensitivity of the system.

Lemma \ref{lm:dp} and Definitions \ref{Def:monotonicity} and \ref{Def:monotonicity2} immediately imply the following two properties of  \eqref{eq:query}.
\begin{theorem}\label{thm:mono2}
Suppose the system is $(\Delta,\varepsilon)$-monotone.  
The mechanism \eqref{eq:query}
where $Y_i^g$ and $Y_i^l$ are drawn i.i.d. from $\Lap(2(\Delta+\varepsilon)/\varrho)$
preserves $(\Delta,\varrho)$-differential privacy.
\end{theorem}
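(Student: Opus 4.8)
The plan is to reduce the statement to an $L_1$-sensitivity bound and then quote Lemma~\ref{lm:dp}. Writing $\sgen=\OPF(\sload)$, the deterministic query underlying \eqref{eq:query} is $\tilde{\query}(\sload)=[\tilde{\query}_1^g,\dots,\tilde{\query}_r^g,\tilde{\query}_1^l,\dots,\tilde{\query}_r^l]^{\T}$, and it suffices to show that for all $(\sload)',(\sload)''\in\setsl$ with $\|(\sload)'-(\sload)''\|_0\le1$ and $\|(\sload)'-(\sload)''\|_1\le\Delta$ one has $\|\tilde{\query}((\sload)')-\tilde{\query}((\sload)'')\|_1\le 2(\Delta+\varepsilon)$; Lemma~\ref{lm:dp} then yields $(\Delta,\varrho)$-differential privacy for the noise parameter $2(\Delta+\varepsilon)/\varrho$.

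First I would set up an admissible monotonicity pair. If $(\sload)'=(\sload)''$ there is nothing to prove, so assume they differ in exactly one coordinate $k$. Since the bound we are after is symmetric in $(\sload)'$ and $(\sload)''$, we may assume $(\sload)'_k\ge(\sload)''_k$ and set $\alphavec:=(\sload)'$, $\betavec:=(\sload)''$; then $\|\alphavec-\betavec\|_0=1$ and $\betavec\le\alphavec\le\betavec+\Delta\cdot\mathbf{1}$, so $(\alphavec,\betavec)$ is exactly the kind of pair appearing in Definition~\ref{Def:monotonicity2} with $\delta=\Delta$.

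Next I would split $\|\tilde{\query}(\alphavec)-\tilde{\query}(\betavec)\|_1$ into a load part and a generation part. For the load part, $\alphavec$ and $\betavec$ agree away from coordinate $k$, and the partition puts load $k$ in a single region, so $\sum_{i=1}^r|\tilde{\query}_i^l(\alphavec)-\tilde{\query}_i^l(\betavec)|=\alphavec_k-\betavec_k\le\Delta$. For the generation part, set $g_j:=\OPF_j(\alphavec)-\OPF_j(\betavec)$ for $j\in\VertexG$. Since $\RVertex_1,\dots,\RVertex_r$ partition $\Vertex$, aggregation cannot increase the $L_1$ norm: $\sum_{i=1}^r\bigl|\sum_{j\in\RVertex_i\cap\VertexG}g_j\bigr|\le\sum_{j\in\VertexG}|g_j|$. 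Now use $|g_j|=g_j-2[g_j]^-$ to write $\sum_{j}|g_j|=\sum_j g_j-2\sum_j[g_j]^-$. The DC power-balance identity $\sum_i\OPF_i(\sload)=\sum_j\sload_j$ noted after Definition~\ref{Def:monotonicity} gives $\sum_j g_j=\sum_j\alphavec_j-\sum_j\betavec_j=\alphavec_k-\betavec_k\le\Delta$, while $(\Delta,\varepsilon)$-monotonicity gives $\sum_j[g_j]^-\ge-\varepsilon$. Hence the generation part is at most $\Delta+2\varepsilon$, and adding the two parts gives $\|\tilde{\query}(\alphavec)-\tilde{\query}(\betavec)\|_1\le2(\Delta+\varepsilon)$, as required.

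The substantive ideas are the two observations that aggregation over a partition is $L_1$-nonexpansive and that $\sum_j|g_j|$ can be controlled by combining the exact total-generation identity with the one-sided monotonicity bound; the rest is bookkeeping. The step I would be most careful about is the reduction at the start: one must note that although $(\Delta,\varrho)$-differential privacy is not symmetric in its two databases, the hypothesis of Lemma~\ref{lm:dp} is, so the ``without loss of generality $(\sload)'_k\ge(\sload)''_k$'' is legitimate, and one must check that $\delta=\Delta>0$ and $\betavec+\delta\cdot\mathbf{1}\ge\alphavec$ so that Definition~\ref{Def:monotonicity2} genuinely applies to the pair constructed.
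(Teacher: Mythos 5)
Your proof is correct and follows essentially the same route as the paper's: bound the $L_1$-sensitivity of the aggregation query by $2(\Delta+\varepsilon)$ via the generation-change bound $\|(\sgen)'-(\sgen)''\|_1\le\Delta+2\varepsilon$, then invoke Lemma~\ref{lm:dp}. The paper states that generation bound in one line ``by Definition~\ref{Def:monotonicity2}''; your derivation of it from the identity $|g_j|=g_j-2[g_j]^-$ together with DC power balance is exactly the omitted calculation, so you have simply supplied details the paper leaves implicit.
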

\begin{proof}
By Definition \ref{Def:monotonicity2}, we have 
\begin{align*}
\|(\sgen)'-(\sgen)''\|_1\leq \|(\sload)'-(\sload)''\|_1+2\varepsilon\leq \Delta+2\varepsilon.
\end{align*}
Thus,
\begin{align*}
\nonumber
&\|\tilde{\query}((\sload)')-\tilde{\query}((\sload)'')\|_1\\
\leq&\|(\sgen)'-(\sgen)''\|_1+ \|(\sload)'-(\sload)''\|_1\leq 2\Delta+2\varepsilon.
\end{align*}
Then the conclusion is implied by Lemma \ref{lm:dp}.
\end{proof}

\begin{corollary}\label{thm:mono1}
Suppose the power system is monotone.  The mechanism \eqref{eq:query}
where $Y_i^g$ and $Y_i^l$ are drawn i.i.d. from $\Lap(2\Delta/\varrho)$
preserves $(\Delta,\varrho)$-differential privacy.
\end{corollary}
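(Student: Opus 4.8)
The plan is to obtain Corollary~\ref{thm:mono1} as the $\varepsilon=0$ special case of Theorem~\ref{thm:mono2}; the only thing that needs checking is the reduction itself. As observed immediately below Definition~\ref{Def:monotonicity2}, a monotone system is $(\delta,0)$-monotone for every $\delta>0$: when $\alphavec\geq\betavec$ with $\|\alphavec-\betavec\|_0=1$, monotonicity gives $\OPF_i(\alphavec)\geq\OPF_i(\betavec)$ for each $i$, so each term $[\OPF_i(\alphavec)-\OPF_i(\betavec)]^-$ vanishes and their sum equals $0\geq-\varepsilon$ with $\varepsilon=0$. Since $\Delta>0$ by hypothesis, the system is in particular $(\Delta,0)$-monotone, and $(\Delta,0)$ is an admissible monotonicity pair because Definition~\ref{Def:monotonicity2} only requires $\delta>0$ and $\varepsilon\geq0$.

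Given this, I would simply invoke Theorem~\ref{thm:mono2} with the monotonicity pair $(\Delta,0)$: it guarantees that the mechanism~\eqref{eq:query} with $Y_i^g,Y_i^l$ drawn i.i.d.\ from $\Lap(2(\Delta+\varepsilon)/\varrho)$ preserves $(\Delta,\varrho)$-differential privacy, and substituting $\varepsilon=0$ turns this into $\Lap(2\Delta/\varrho)$, which is exactly the claim. So there is no genuine obstacle here — the corollary is a pure specialization, and the ``hard part'' (relating sensitivity to monotonicity) has already been done in Theorem~\ref{thm:mono2}.

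If a self-contained argument were preferred, I would inline the computation from the proof of Theorem~\ref{thm:mono2}, now simplified by $\varepsilon=0$: fix $(\sload)',(\sload)''$ with $\|(\sload)'-(\sload)''\|_0\leq1$ and $\|(\sload)'-(\sload)''\|_1\leq\Delta$; by symmetry of both the sensitivity bound and the defining inequality of $(\Delta,\varrho)$-differential privacy in their two load arguments one may assume $(\sload)'\geq(\sload)''$, so monotonicity gives $\OPF_i((\sload)')\geq\OPF_i((\sload)'')$ for all $i$, and summing together with the DC identity $\sum_i\OPF_i(\sload)=\sum_j\sload_j$ yields $\|(\sgen)'-(\sgen)''\|_1=\|(\sload)'-(\sload)''\|_1\leq\Delta$. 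Because $\RVertex_1,\dots,\RVertex_r$ partition $\Vertex$, the aggregation query is nonexpansive separately in $\sgen$ and $\sload$, hence $\|\tilde{\query}((\sload)')-\tilde{\query}((\sload)'')\|_1\leq\|(\sgen)'-(\sgen)''\|_1+\|(\sload)'-(\sload)''\|_1\leq2\Delta$, and Lemma~\ref{lm:dp} with $\Delta_1=2\Delta$ completes the argument. The only point meriting a line of care is the assumption $(\sload)'\geq(\sload)''$, which is legitimate precisely because both quantities above are symmetric in the two database arguments.
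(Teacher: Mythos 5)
Your proposal is correct and matches the paper's intended argument: the paper states this result as a corollary of Theorem~\ref{thm:mono2} with no separate proof, relying exactly on the observation (made right after Definition~\ref{Def:monotonicity2}) that a monotone system is $(\Delta,0)$-monotone, so substituting $\varepsilon=0$ gives $\Lap(2\Delta/\varrho)$. Your optional self-contained version is also sound and consistent with the computation in the proof of Theorem~\ref{thm:mono2}.
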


\begin{remark}
The monotonicity pairs for a fixed system are not unique. In Theorem \ref{thm:mono2}, for any given $\Delta>0$, there always exists $\varepsilon>0$ such that the system is $(\Delta,\varepsilon)$-monotone.
\end{remark}

\begin{remark}
For the Laplace distributions given in Theorems \ref{thm:mono1} and Corollary \ref{thm:mono2}, the level of differential privacy
is independent of how the aggregation regions are divided and how the data are aggregated.
In particular, the amount of noise required relies on neither the number $r$ of regions 
nor the number of buses in each region.
\end{remark}

The following example shows why we want the level of differential privacy to be independent of the region division.
Consider a trivial mechanism which can preserve the same $(\Delta,\varrho)$-differential privacy by adding i.i.d. Laplace noise drawn from $\Lap(\Delta/\varrho)$ to each individual load
and then solving an OPF problem with the  noisy load data to obtain the  generations.
This mechanism can guarantee  $(\Delta,\varrho)$-differential privacy, assuming that the noisy load makes $\OPF$ feasible and yields a unique solution.
Then, from the central limit theorem, the equivalent noise added to $\tilde{\query}_i^l$ would converge in probability to the  Gaussian distribution $\mathcal{N}(0,2(\Delta/\varrho)^2|\RVertex_i\cap\VertexL|)$.
The variance of this distribution depends on the size of the region and can grow rapidly if the region is large, in comparison to the (equivalent) Laplacian distribution $\Lap(2(\Delta+\varepsilon)/\varrho)$ given in Theorem \ref{thm:mono2}.
As for the equivalent noise added to $\tilde{\query}_i^g$, we can give a rough estimation. 
Since the noise added to each load is on the order of $\Delta/\varrho$, the noise vector added to the load vector has the $L_1$-norm on the order of $\Delta/\varrho|\RVertex_i\cap\VertexL|$.
Assume that $(\Delta,\varepsilon)$-monotone system can potentially amplify the noise in the load vector by a factor of roughly $1+2\varepsilon/\Delta$,
the equivalent noise added to $\tilde{\query}_i^g$ could be on the order of $(\Delta+2\varepsilon)/\varrho|\RVertex_i\cap\VertexL|$,
which also depends on the size of the region and can potentially be quite large.

\subsection{Generalization}

In general, for an arbitrary query not necessarily the aggregation query, 
the $L_1$-sensitivity $\Delta_1$ in \eqref{eq:L1sens} depends on the properties of both $\tilde{\query}$ and $\OPF$. 
When $\tilde{\query}$ is the aggregation query, the problem boils down to the monotonicity of $\OPF$, as shown in the previous subsection.
However, for general $\tilde{\query}$, the estimation of $\Delta_1$ may require a more 
careful analysis of the structure of $\tilde{\query}$.
The next result provides a rough estimation of the required amount of noise for
differential privacy.  Its proof is omitted due to space limitation.
\begin{theorem}\label{thm:generalcase}
Suppose a power system is $(\Delta,\varepsilon)$-monotone, and all elements of the Jacobian
matrix $\JM_{\tilde{\query}}$ with respective to $\tilde{\query}\in\Real^{r}$
are upper bounded by the same constant $U$.
Then the mechanism
$\query=\tilde{\query}+\dpnoises$, where all $Y_i$ are drawn from independent Laplace distribution
$\Lap(2Ur{(\Delta+\varepsilon)}/{\varrho})$,
 preserves $(\Delta,\varrho)-$differential privacy.
\end{theorem}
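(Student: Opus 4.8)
The plan is to derive Theorem~\ref{thm:generalcase} from Lemma~\ref{lm:dp} by exhibiting a valid $L_1$-sensitivity constant $\Delta_1 = 2Ur(\Delta+\varepsilon)$ for the query $\tilde{\query}$. Recall that the generation is always taken to be the unique optimal solution of \eqref{eq:opf1}, so $\sgen = \OPF(\sload)$ and $\tilde{\query}$ may be regarded either as a map of the pair $(\sgen,\sload)$ or, after this substitution, of $\sload$ alone; the hypothesis bounds every entry of the Jacobian of $\tilde{\query}$ with respect to $(\sgen,\sload)$ by $U$ in absolute value. So fix two load vectors $(\sload)'$ and $(\sload)''$ with $\|(\sload)'-(\sload)''\|_0\le 1$ and $\|(\sload)'-(\sload)''\|_1\le\Delta$, and write $(\sgen)'=\OPF((\sload)')$, $(\sgen)''=\OPF((\sload)'')$. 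It suffices to show $\|\tilde{\query}((\sload)')-\tilde{\query}((\sload)'')\|_1 \le 2Ur(\Delta+\varepsilon)$, since Lemma~\ref{lm:dp} then gives $(\Delta,\varrho)$-differential privacy for $\query=\tilde{\query}+\dpnoises$ with $Y_i\sim\Lap(2Ur(\Delta+\varepsilon)/\varrho)$.

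The first ingredient is the bound on the induced change in generation. If $(\sload)'=(\sload)''$ there is nothing to prove; otherwise the two vectors differ in exactly one coordinate, and, relabelling if necessary, we may assume $(\sload)'\ge(\sload)''$. Since $\|(\sload)'-(\sload)''\|_\infty\le\|(\sload)'-(\sload)''\|_1\le\Delta$, this pair satisfies the hypotheses of Definition~\ref{Def:monotonicity2} with $\delta=\Delta$. Repeating verbatim the computation in the proof of Theorem~\ref{thm:mono2} --- the DC identity $\sum_i\OPF_i(\cdot)=\sum_j(\cdot)_j$ forces the total generation change to equal the total load change, which is at most $\Delta$, while $(\Delta,\varepsilon)$-monotonicity caps the aggregate downward movement of the generators at $\varepsilon$, so that the positive part of the change is at most $\Delta+\varepsilon$ --- we obtain
\begin{equation*}
\|(\sgen)'-(\sgen)''\|_1 \;\le\; \|(\sload)'-(\sload)''\|_1 + 2\varepsilon \;\le\; \Delta + 2\varepsilon .
\end{equation*}
Together with $\|(\sload)'-(\sload)''\|_1\le\Delta$, this shows that the points $((\sgen)',(\sload)')$ and $((\sgen)'',(\sload)'')$ are at $L_1$-distance at most $(\Delta+2\varepsilon)+\Delta=2(\Delta+\varepsilon)$.

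The second ingredient turns this into a sensitivity bound via the Jacobian hypothesis. Applying the fundamental theorem of calculus to each coordinate $\tilde{\query}_i$ along the segment joining $((\sgen)'',(\sload)'')$ to $((\sgen)',(\sload)')$ and using that every entry of $\JM_{\tilde{\query}}$ is bounded by $U$, we get
\begin{equation*}
|\tilde{\query}_i((\sload)')-\tilde{\query}_i((\sload)'')| \;\le\; U\big(\|(\sgen)'-(\sgen)''\|_1+\|(\sload)'-(\sload)''\|_1\big) \;\le\; 2U(\Delta+\varepsilon),
\end{equation*}
and summing over $i=1,\dots,r$ gives $\|\tilde{\query}((\sload)')-\tilde{\query}((\sload)'')\|_1 \le 2Ur(\Delta+\varepsilon)=:\Delta_1$, which is the required sensitivity. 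Invoking Lemma~\ref{lm:dp} with this $\Delta_1$ completes the argument.

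The step that requires care is the passage from ``every entry of $\JM_{\tilde{\query}}$ is at most $U$'' to the finite-difference inequality for $\tilde{\query}_i$. The mean-value estimate above is immediate when $\tilde{\query}$ is linear in $(\sgen,\sload)$ --- which covers the aggregation, energy-source, and inter-regional-flow queries discussed earlier --- and more generally when $\tilde{\query}$ is $C^1$ on a convex domain containing the segment in question. If instead one evaluates $\tilde{\query}$ only along the OPF graph, one must integrate $\tfrac{d}{dt}\tilde{\query}_i(\OPF(\sload(t)),\sload(t))$ over the load segment and control the total variation of $t\mapsto\OPF(\sload(t))$; there one would use Assumption~\ref{A:derivative} to express $\OPF$ as piecewise affine along the segment and exploit the sign structure of the pieces to relate this total variation back to the endpoint bound $\Delta+2\varepsilon$. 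This is the only genuinely delicate point, and it is why the result is presented only as a rough estimate.
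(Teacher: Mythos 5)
The paper omits the proof of Theorem~\ref{thm:generalcase} entirely (``its proof is omitted due to space limitation''), so there is no in-paper argument to compare against; your proof is correct and is evidently the intended one, since the constant $2Ur(\Delta+\varepsilon)$ factors exactly as you decompose it --- $2(\Delta+\varepsilon)$ is the joint $L_1$ displacement of $((\sgen)',(\sload)')$ versus $((\sgen)'',(\sload)'')$ obtained from the Theorem~\ref{thm:mono2} computation, $U$ converts that displacement into a per-coordinate bound on $\tilde{\query}$, and $r$ accounts for summing the coordinates into an $L_1$-sensitivity fed to Lemma~\ref{lm:dp}. Two small points you handle implicitly but correctly: the theorem's hypothesis must be read as a bound on $|\partial\tilde{\query}_i/\partial x_j|$ in absolute value (a literal one-sided upper bound would not support the mean-value estimate), and the Jacobian must be taken with respect to the joint variable $(\sgen,\sload)$ rather than through the composition with $\OPF$ --- the latter is forced by consistency with the paper's claim that $U=1$ in the aggregation case. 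Your closing caveat about when the finite-difference estimate is legitimate is well placed and does not affect the linear queries the paper has in mind.
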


In the aggregation case, $U=1$ and $r$ is the number of regions. 
Comparing Theorem \ref{thm:mono2} and Theorem \ref{thm:generalcase}, 
the required Laplace noise is reduced by a factor of $r$ in Theorem \ref{thm:mono2}
which exploits the simple structure of the aggregation function.
\section{Simulation}\label{sec:Simulation}
\begin{figure}
\centering
\includegraphics[width=\columnwidth]{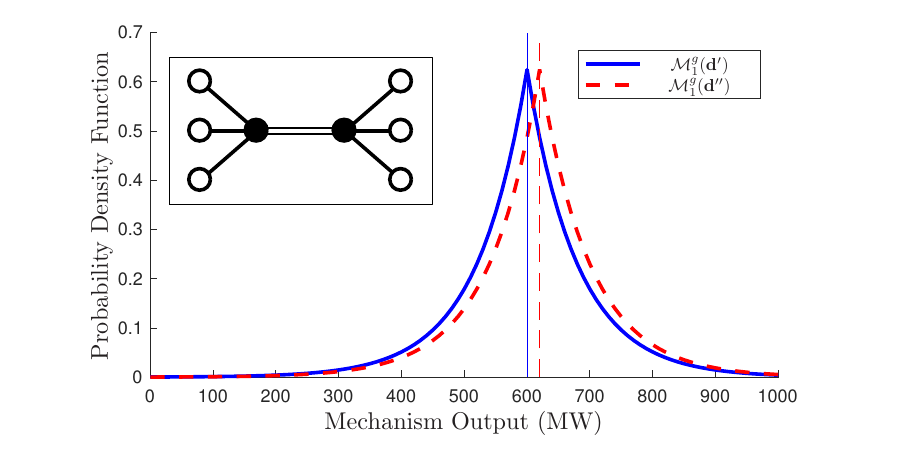}
\vspace{-2em}
\caption{The embedded diagram shows the topology of a radial power network, where black and white nodes indicate generators and loads, respectively. The double-line edge is the bottleneck of the system, and in our example, its line flow constraint is always binding. 
The vertical lines indicate the ground-truths of the queries for two datasets 
whose difference we want to hide.  The curves are the probability density functions of 
the mechanism outputs that contain Laplace noise.}
\vspace{-1em}
\label{fig:DP_tree}
\end{figure}
\subsection{Radial Network}
First, we apply the mechanism \eqref{eq:query} 
to a radial power network (embedded image in Figure  \ref{fig:DP_tree}), 
i.e., network with a tree topology.
Corollary \ref{Co:treeMono} implies that the system is monotone, and by Theorem \ref{thm:mono1}, the noise should be drawn independently from the Laplace distribution $\Lap(2\Delta/\varrho)$ so as to preserve 
 $(\Delta,\varrho)$-differential privacy.  In this simulation, we set $\Delta=20$ (MW) and $\varrho=0.5$. The interpretation is that  any two datasets whose difference we would like to hide 
 should differ in any one load by at most $20$ MW.  
 This example has been constructed so that the double-line edge in Fig. \ref{fig:DP_tree} is a bottleneck (i.e., a binding constraint in the solution of \eqref{eq:opf1}).
 According to Appendix \ref{app:proof_of_Thm_4}, this bottleneck splits the tree into two subtrees and each subtree contains exactly one generator which is not saturated. Provided the OPF problem remains feasible, any change in the load will directly lead to the same amount of change in the generator which resides in the same subtree as the changing load.

Specifically, if the load on the left increases by $\Delta$, the left generator will increase its
generation by $\Delta$ while the right generator will remain unchanged. 
Hence the ground-truths of the aggregation queries (shown by vertical lines in Fig. \ref{fig:DP_tree}) 
are separated by $20$ MW, the same as $\Delta$. The density
functions shown in Fig. \ref{fig:DP_tree} are sharper than those of networks with cycles, as shown in Figures \ref{fig:DP_case9} and \ref{fig:DP_cycle}.

\begin{figure}
\centering
\includegraphics[width=\columnwidth]{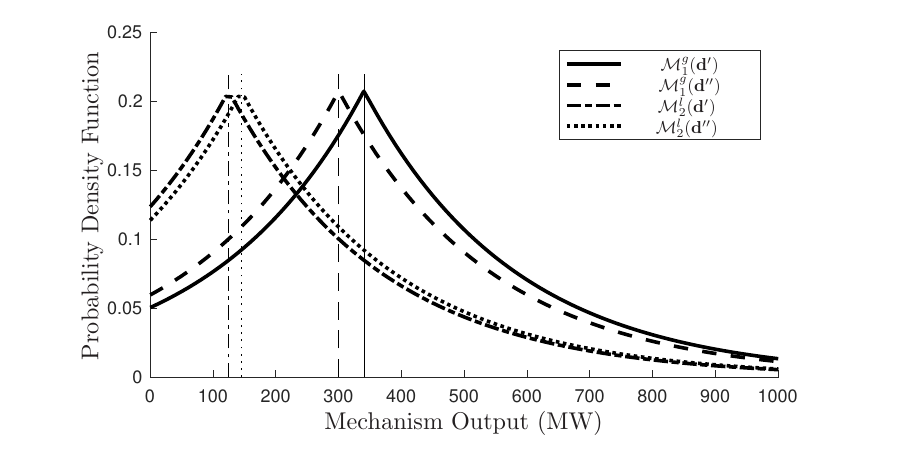}
\vspace{-2em}
\caption{Differential privacy for IEEE 9-bus case. The vertical lines indicate the ground-truths, and the curves show the probability distribution of the mechanism outputs. Only the aggregated generation in region 1 and the aggregated load in region 2 are presented in the figure.}
\label{fig:DP_case9}
\end{figure}
\subsection{IEEE 9-Bus Network}
As we mentioned, the IEEE 9-bus network is $(\Delta,2.01\Delta)$-monotone for any positive $\Delta$.
We again set $\Delta=20$ (MW), $\varrho=0.5$ and divide the system into two regions.
In our simulations, region 1 contains buses 1, 2, 4, 5, 6, while region 2 contains buses
3, 7, 8, 9. Figure \ref{fig:DP_case9} shows the probability density functions
for the aggregation mechanism when the load on bus 9 increases by $\Delta$. 
The difference between $\query_2^l(\data')$ and $\query_2^l(\data'')$ comes directly from the change on bus 9, but the difference between $\query_1^g(\data')$ and $\query_1^g(\data'')$ is mainly due to the fact that generator 3 has to increase its generation so as to compensate for the  decrease in generation on bus 2. 
Hence the distributions for $\query_1^g(\data')$ and $\query_1^g(\data'')$ are further apart compared to the distributions for $\query_2^l(\data')$ and $\query_2^l(\data'')$. As a result, to preserve the same level of differential privacy, the required noise magnitude is greater than what would have
been needed if the system were monotone. The distributions in Figure~\ref{fig:DP_case9} are 
indeed flatter than those in Figure \ref{fig:DP_tree}.

\begin{figure}
\centering
\vspace{-0.5em}
\includegraphics[width=\columnwidth]{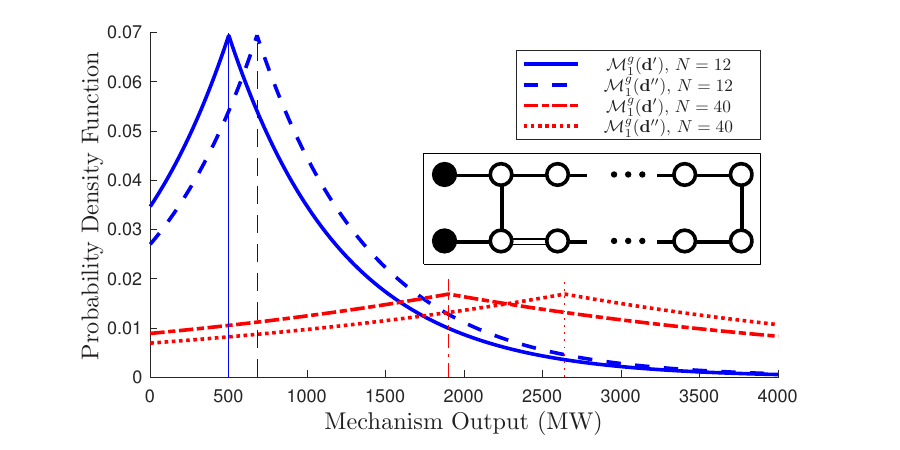}
\vspace{-2em}
\caption{Differential privacy for a ring network, shown in the embedded diagram. 
Black nodes represent generators and white nodes represent loads.
The figure shows the density functions of the aggregation mechanism for different network sizes.}
\vspace{-1em}
\label{fig:DP_cycle}
\end{figure}
\subsection{Bad Topology}
There are networks whose behavior
can be arbitrarily far from monotone, i.e., they are 
$(\delta, \varepsilon)$-monotone with large $\varepsilon$.
For these networks, differential privacy is only possible with the addition
of large noise, potentially rendering the output of the mechanism meaningless. 

One such network is shown in  Figure~\ref{fig:DP_cycle}. This network consists
of a cycle with $N$ buses, with generators on two adjacent buses
(black nodes). 
The branch indicated by the double-line edge is the only bottleneck
where the line flow constraint is binding.
It can be shown that this network is $(\Delta, (\nGL-4)\Delta)$-monotone for 
some positive $\Delta$.
This means that a change in load can be amplified $\nGL-4$ times in some generator, 
implying a large $L_1$-sensitivity. 
Figure~\ref{fig:DP_cycle} shows that to achieve $(20,0.5)$-differential privacy, a far 
bigger noise is required than in the monotone case. As $N$ increases, the 
density function becomes flatter.
When $N=40$ buses, the density function in Figure~\ref{fig:DP_cycle} 
is close to a uniform distribution, i.e., the mechanism hardly discloses any useful information.

\section{Conclusion}
We have proposed a differential privacy model for OPF data 
in power systems.  We have introduced the notion of  
monotonicity of the $\OPF$ operator and used it 
to determine the amount of noise needed to preserve
differential privacy for aggregation queries. 
We have also shown that, for the aggregation query, the level of differential privacy is independent of the number of aggregation regions and the number of buses in a region. 
We also derive the required noise level for arbitrary queries with bounded Jacobian values.
Future work will look at how these results can be applied to the design of new mechanisms.







\bibliographystyle{IEEEtran}
\bibliography{my-bibliography}

\appendix
 
\subsection{Validating Assumption \ref{A:vectorf}}\label{app:Assumption1}
Let $\taueq\in\Real^{\nGL+1}$ be the vector of Lagrangian multipliers 
associated with equality constraints \eqref{eq:opf1.b}, \eqref{eq:opf1.c},
and $(\lambdap,\lambdam)$ and $(\mup,\mum)$ be the Lagrangian multipliers 
associated with inequalities \eqref{eq:opf1.d} and \eqref{eq:opf1.e} respectively. 
As \eqref{eq:opf1} is a linear program \cite{Ber97introduction}, the following KKT condition
holds at an optimal point when \eqref{eq:opf1} is feasible.
\begin{subequations}
\begin{eqnarray}
&& \eqref{eq:opf1.b}-\eqref{eq:opf1.e}
\label{eq:KKT.a}\\
&& \bf{0}=M^{\T}\taueq+\CM\B(\mup-\mum)
\label{eq:KKT.b}\\
&&  -\f=-[\taueq_1,\taueq_2,\cdots,\taueq_{\nG}]^{\T}+\lambdap-\lambdam
\label{eq:KKT.c}\\
&& \mup,\mum,\lambdap,\lambdam\geq 0
\label{eq:KKT.d}\\
&& \mup^{\T}(\B\CM^{\T}\ang-\overline{\pflow})=\mum^{\T}(\underline{\pflow}-\B\CM^{\T}\ang)=0
\label{eq:KKT.e}\\
&& \lambdap^{\T}(\sgen-\genulim)=\lambdam^{\T}(\genllim-\sgen)=0,
\label{eq:KKT.f}
\end{eqnarray}
\label{eq:KKT}
\end{subequations}
where
\begin{align*}
\M:=\left[
\begin{array}{c}
\CM\B\CM^{\T}\\
1~0~0~\cdots~0
\end{array} 
\right]
\end{align*}
is an $(\nGL+1)$-by-$\nGL$ matrix with rank $\nGL$. 
Condition \eqref{eq:KKT.a} corresponds to primal feasibility, condition \eqref{eq:KKT.d} corresponds to dual feasibility, conditions \eqref{eq:KKT.e}, \eqref{eq:KKT.f} correpond to complementary slackness,
and conditions \eqref{eq:KKT.b}, \eqref{eq:KKT.c} correspond to stationarity \cite{Boy04convex}.

The following Proposition shows that for a fixed network, it is easy to find an objective vector $\f$ such that \eqref{eq:opf1} will always have a unique solution for a reasonable $\sload$.
When Assumption~\ref{A:vectorf} does not hold, Proposition~\ref{fisdense} implies that we can always perturb $\f$ a little such that the assumption is valid.
\begin{proposition}\label{fisdense}
$\setf$ is dense in $\Real^{\nG}$.
\end{proposition}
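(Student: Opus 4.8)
The plan is to prove the stronger statement that $\Real^{\nG}\setminus\setf$ is contained in a finite union of proper linear subspaces of $\Real^{\nG}$; since such a union is closed with empty interior, this makes $\setf$ open and dense, in particular dense. The starting point is to eliminate $\ang$. Because the columns of $\CM$ sum to zero, $\CM\B\CM^{\T}$ is the susceptance‑weighted Laplacian of the connected graph $\Graph$, of rank $\nGL-1$ with nullspace $\mathrm{span}(\mathbf{1})$; hence for $(\sgen,\sload)$ with $\mathbf{1}^{\T}\sgen=\mathbf{1}^{\T}\sload$ the constraints \eqref{eq:opf1.b}--\eqref{eq:opf1.c} determine $\ang$ as a fixed affine function of $(\sgen,\sload)$, so $\B\CM^{\T}\ang=\mathbf{M}_g\sgen-\mathbf{M}_l\sload$ for matrices $\mathbf{M}_g,\mathbf{M}_l$ that do not depend on $\sload$. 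Thus \eqref{eq:opf1} is equivalent to minimizing $\f^{\T}\sgen$ over
\[
P(\sload):=\{\sgen\in\Real^{\nG}:\ \mathbf{1}^{\T}\sgen=\mathbf{1}^{\T}\sload,\ \genllim\le\sgen\le\genulim,\ \underline{\pflow}\le\mathbf{M}_g\sgen-\mathbf{M}_l\sload\le\overline{\pflow}\},
\]
and uniqueness of the $\OPF$ solution is equivalent to uniqueness of this LP's optimizer. The crucial feature I would exploit throughout is that the $2\nG+2\nE$ inequality rows of $P(\sload)$ — the vectors $\pm\base_j$ and $\pm(\mathbf{M}_g)_{e,\cdot}$, collected as $\{\mathbf{a}_i\}_{i\in I}$, with right‑hand sides $b_i(\sload)$ — together with the equality row $\mathbf{1}^{\T}$, are the same for every $\sload$; only the right‑hand sides change. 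Also, for $\sload\in\setsl$ the feasible set is nonempty and compact (as $\sgen$ is boxed and $\ang$ is a function of $\sgen$), so an optimizer exists.

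For uniqueness I would argue as follows. Suppose the optimal face $F$ of $P(\sload)$ has $\dim F\ge 1$, and let $S\subseteq I$ be the set of inequalities active on all of $F$. By the standard face‑hull identity for polyhedra, $\mathrm{aff}(F)=\{\sgen:\mathbf{a}_i^{\T}\sgen=b_i(\sload)\ \forall i\in S,\ \mathbf{1}^{\T}\sgen=\mathbf{1}^{\T}\sload\}$, whose direction space $\{\mathbf{v}:\mathbf{a}_i^{\T}\mathbf{v}=0\ \forall i\in S,\ \mathbf{1}^{\T}\mathbf{v}=0\}$ depends only on $S$. Since $\f^{\T}$ is constant on $F$, it is constant on $\mathrm{aff}(F)$, so $\f\in V_S:=\mathrm{span}(\{\mathbf{a}_i:i\in S\}\cup\{\mathbf{1}\})$, and $\dim V_S=\nG-\dim\mathrm{aff}(F)=\nG-\dim F\le\nG-1$. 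Hence every $\f$ for which some $\sload\in\setsl$ makes the $\OPF$ solution non‑unique lies in $W:=\bigcup\{V_S:S\subseteq I,\ \dim V_S\le\nG-1\}$, a finite union of proper subspaces.

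For the multiplier bound \eqref{eq:multiplier1}, note that it concerns the multipliers $\mup,\mum,\lambdap,\lambdam$ on the inequality constraints \eqref{eq:opf1.d}--\eqref{eq:opf1.e}. Eliminating $\ang$ through the equality constraints carries the KKT system \eqref{eq:KKT} to that of the reduced LP and leaves these multipliers unchanged; combining \eqref{eq:KKT.b}--\eqref{eq:KKT.c} and substituting out $\taueq$ gives a stationarity relation $\f=-z\mathbf{1}+\sum_{i\in I}\nu_i\mathbf{a}_i$ with $\nu\ge\mathbf{0}$ supported on the active inequalities, where $\nu$ collects $\lambdap,\lambdam,\mup,\mum$. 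If \eqref{eq:multiplier1} fails at some $(\f,\sload)$, then some valid $\nu$ has $\|\nu\|_0\le\nG-2$, and for its support $T$ we get $\f\in\mathrm{span}(\{\mathbf{a}_i:i\in T\}\cup\{\mathbf{1}\})$, a subspace of dimension at most $|T|+1\le\nG-1$. So every $\f$ for which \eqref{eq:multiplier1} fails for some $\sload\in\setsl$ lies in $W':=\bigcup\{\mathrm{span}(\{\mathbf{a}_i:i\in T\}\cup\{\mathbf{1}\}):T\subseteq I,\ |T|\le\nG-2\}$, again a finite union of proper subspaces.

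Combining the two steps, $\Real^{\nG}\setminus\setf\subseteq W\cup W'$, which is closed with empty interior, so $\setf$ is dense (indeed open and dense). The point that requires care — and the reason the textbook fact ``a generic linear objective makes an LP non‑degenerate'' does not immediately suffice — is that $\sload$ ranges over the continuum $\setsl$: the argument works only because the reduced LP has an $\sload$‑independent constraint matrix, so every subspace entering $W$ and $W'$ is determined by a subset of the fixed rows $\{\mathbf{a}_i\}_{i\in I}$ and the bad set of objectives does not grow as $\sload$ varies. A secondary technical chore is verifying the KKT correspondence under elimination of $\ang$ and the standard face‑hull identity used in the uniqueness step.
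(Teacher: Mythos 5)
Your proposal is correct and follows essentially the same strategy as the paper's proof: cover the set of ``bad'' objective vectors by a finite union of subspaces of dimension at most $\nG-1$, each determined by a subset of inequality-constraint indices together with the ($\sload$-independent) equality structure, and conclude density from the fact that a finite union of proper subspaces is nowhere dense. The only differences are organizational --- you eliminate $\ang$ to get a reduced LP in $\sgen$ and treat non-uniqueness via the affine hull of the optimal face rather than via the existence of sparse multipliers, whereas the paper works directly with the KKT system \eqref{eq:KKT} and folds uniqueness into the multiplier count --- but the dimension bookkeeping ($\nGL+1$ independent equalities plus at most $\nG-2$ inequality normals spanning at most $\nG-1$ dimensions in $\f$-space) is identical.
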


\begin{proof} 
We first show that for a fixed network $(\B, \CM, \para)$ and $\sload\in\setsl$, if the primal optimal solution to $\eqref{eq:opf1}$ is not unique for some $\f$, then there must exist $\taueq,\mup,\mum,\lambdap,\lambdam$ such that \eqref{eq:KKT} holds but \eqref{eq:multiplier1} does not.
Geometrically, an LP has multiple optimal solutions if and only if 
the objective vector is normal to some hyperplane defined by equality constraints 
and the set of binding inequality constraints. 
In our case, the objective vector $[\f^{\T}, \bf{0}^{\T}]^{\T}$ is a $\nG+\nGL$ dimensional vector.
As there are $\nGL+1$ linearly independent equality constraints in 
\eqref{eq:opf1.b}, \eqref{eq:opf1.c},\footnote{Here, independence means that the gradient of the equalities \eqref{eq:opf1.b} and \eqref{eq:opf1.c} with respect to $[(\sgen)^{\T},\ang^{\T}]^{\T}$ has full column rank $\nGL+1$.}
it is therefore enough to take $\leq\nG-2$ inequality constraint vectors, along with the above $\nGL+1$ to represent $[\bf{0}^{\T},\f^{\T}]^{\T}$. 
That is to say, there exist Lagrange multipiers that satisfy \eqref{eq:KKT}, and there are at most $\nG-2$ non-zero coefficients in $\mup,\mum,\lambdap,\lambdam$, i.e.,
\begin{align}\label{eq:multiplier2}
\|\mup\|_0+\|\mum\|_0+\|\lambdap\|_0+\|\lambdam\|_0< \nG-1.
\end{align}
Thereby, \eqref{eq:multiplier1} implies uniqueness, and we have
\begin{align}\label{eq:omegaf}
\setf=\{\f~|~\forall \sload \in\setsl, \text{the solutions of}~\eqref{eq:KKT}~\text{satisfy}~\eqref{eq:multiplier1}\}.
\end{align}
For $\setS\subseteq[\nE]$, $\setT\subseteq[\nG]$ such that $|\setS|+|\setT|\leq\nG-2$, we construct $\setQ(\setS,\setT)$ to be the set of $\f$ such that 
$\exists \taueq\in\Real^{\nGL+1},\mupm\in\Real^{\nE},\lambdapm\in\Real^{\nG}$ satisfying:
\begin{subequations}
\begin{eqnarray}
&& \bf{0}=M^{\T}\taueq+\CM\B\mupm
\label{eq:setQ.a}\\
&&  -\f=-[\taueq_1,\taueq_2,\cdots,\taueq_{\nG}]^{\T}+\lambdapm
\label{eq:setQ.b}\\
&&  \mupm_i\neq 0 \Rightarrow i\in\setS
\label{eq:setQ.c}\\
&&  \lambdapm_i\neq 0 \Rightarrow i\in\setT.
\label{eq:setQ.d}
\end{eqnarray}
\label{eq:setQ}
\end{subequations} 
When $\setS$ and $\setT$ are fixed, the vector $\CM\B\mupm$ takes value in an $|\setS|$ dimensional subspace. Since $\rank(\M)=\nGL$, the possible values of $\taueq$ must fall within an $|\setS|+1$ dimensional subspace.
Therefore, \eqref{eq:setQ.b} implies that $\f$ must be in an $|\setS|+1+|\setT|\leq\nG-1$ dimensional subspace, and hence $\interior(\close(\setQ(\setS,\setT))=\emptyset$. The set
\begin{align}
\bigcup\limits_{\tiny
\substack{\setS\subseteq[\nE],\setT\subseteq[\nG]\\
|\setS|+|\setT|\leq\nG-2 }}
\setQ(\setS,\setT)
\end{align}
is the union of finitely many nowhere dense sets and thereby is nowhere dense itself in $\Real^{\nG}$. On the other hand, \eqref{eq:omegaf} and \eqref{eq:setQ} imply that
\begin{align}
\Bigg(
\bigcup\limits_{\tiny
\substack{\setS\subseteq[\nE],\setT\subseteq[\nG]\\
|\setS|+|\setT|\leq\nG-2 }}
\setQ(\setS,\setT)
\Bigg)^{\co}\subseteq\setf
\end{align}
and hence $\setf$ is dense in $\Real^{\nG}$.

\subsection{Validating Assumption \ref{A:derivative}}\label{app:Assumption_der}
We first have the following proposition.

\begin{proposition}\label{limitsdense}
Let $\setparar\subseteq\setpara$ be the set such that $\forall \para\in\setparar$, the set
$\setslr(\para)$ 
is dense in $\setsl(\para)$. Then $\setparar$ is dense in $\setpara$.
\end{proposition}

The proof is given at the end of this subsection.

We will then validate that the following assumption is valid.
\begin{assumption}\label{A:limits}
The parameter $\para$ for the limits of generations and branch power flows is assumed to be in $\setparar$, as defined in Proposition \ref{limitsdense}. 
\end{assumption}

If Assumption~\ref{A:limits} does not hold, Proposition~\ref{limitsdense} implies that we can always perturb $\para$ such that the assumption holds.
When Assumption \ref{A:vectorf} and Assumption \ref{A:limits} hold, then Assumption \ref{A:derivative} is directly implied by the results given in \cite{Fia1976sensitivity,Fia1983introduction}. 
Next, we prove Proposition \ref{limitsdense}.

\begin{proof}
Consider the power equations below:
\begin{align}\label{eq:constraints}
\mathbf{T}\ang:=
\left[
\begin{array}{c}
\CM\B\CM^{\T}\\
\B\CM^{\T}
\end{array} 
\right]\cdot\ang=
\left[
\begin{array}{c}
\sgen\\
-\sload\\
\pf
\end{array} 
\right].
\end{align}
Proposition \ref{fisdense} and Assumption \ref{A:vectorf} show that there will always be at least $\nG-1$ binding inequality constraints as each non-zero multiplier will force one inequality constraint to be binding.
A constraint is binding means some $\sgen_i$ equals either $\genulim_i$ or $\genllim_i$ (as in the upper $\nG$ rows in \eqref{eq:constraints}),
or some $\pf_i$ equals either $\overline{\pflow}_i$ or $\underline{\pflow}_i$ (as in the lower $\nE$ rows in \eqref{eq:constraints}).
We have $\rank(\mathbf{T})=\nGL-1$.
We will first use the following procedure to construct a new set $\setparar'$.
\begin{enumerate}[I.]
\item $\setparar' \leftarrow \setpara$
\item For each $\setS\subseteq[\nG]\cup[\nGL+1,\nGL+\nE]$, construct $\mathbf{T}_{\setS}$.
\begin{enumerate}[a)]
\item If $\rank(\mathbf{T}_{\setS})=|\setS|$, then continue to the next $\setS$.
\item If $\rank(\mathbf{T}_{\setS})<|\setS|$, then consider
\begin{align}\label{eq:Gamma}
\!\!\Gamma:=\!\prod\limits_{\mathclap{\tiny \substack{i\in\setS\cap[\nG]}}}
\{\base_{\scriptscriptstyle i},\base_{\scriptscriptstyle\nG+i}\}\!\times
\!\prod\limits_{\mathclap{\tiny
\substack{j\in[E]\\j+\nGL\in\setS}}}
\{\base_{\scriptscriptstyle 2\nG+j},\base_{\scriptscriptstyle 2\nG+\nE+j}\}
\end{align}
and update $\setparar'$ as
\begin{align}\label{eq:setupdate}
\setparar' \leftarrow\setparar' \setminus
\bigcup\limits_{\gamma\in\Gamma}\big\{\para|\exists\ang,\st~\gamma^{\T}\para=\mathbf{T}_{\setS}\ang\big\}.
\end{align}
\end{enumerate}
\item Return $\setparar'$.
\end{enumerate}

In the above procedure, an $n$-tuple of vectors is also regarded as a matrix of $n$ columns.\footnote{Hence, each $\gamma\in\Gamma$ can also be regarded as a $(2\nG+2\nE)$-by-$|\setS|$ matrix.}
Since $\gamma\in\Gamma$ is of rank $|\setS|$ and $\{\mathbf{T}_{\setS}\ang|\forall\ang\in\Real^{\nGL}\}$ defines a subspace with $\leq|\setS|-1$ dimensions,
each set of $\{\para|\exists\ang,\st~\gamma^{\T}\para=\mathbf{T}_{\setS}\ang\}$ in \eqref{eq:setupdate} is a subspace with dimension strictly lower than $2\nG+2\nE$.
Using the same technique as in the proof of Proposition \ref{fisdense}, we have that $\setparar'$ is dense in $\setpara$.
It is sufficient to show that $\setparar'\subseteq\setparar$.

In fact, $\forall\para\in\setparar'$, if for some $\sload\in\setsl(\para)$, the optimal solution to \eqref{eq:opf1} has $\geq \nG$ binding inequality constraints, 
then we use $\setS\subseteq[\nG]\cup[\nGL+1,\nGL+\nE],|\setS|=\nG$ again to denote the indices of any $\nG$ binding inequality constraints.
As those $\nG$ inequality constraints are binding, there must exist $\ang\in\Real^{\nGL}$ and $\gamma\in\Gamma$, as defined in \eqref{eq:Gamma}, such that $\gamma^{\T}\para=\mathbf{T}_{\setS}\ang$.
According to \eqref{eq:setupdate}, $\rank(\mathbf{T}_{\setS})$ must be exactly $\nG$. Plugging the optimal $\ang$, as well as the binding limits indexed by some $\gamma\in\Gamma$, into \eqref{eq:constraints}, we have
\begin{subequations}
\begin{eqnarray}
&& \gamma^\T\para=\mathbf{T}_{\setS}\ang\\
&& -\sload=\mathbf{T}_{[\nG+1,\nGL]}\ang.
\end{eqnarray}
\label{eq:slplane}
\end{subequations}
For each $\gamma\in\Gamma$, as $\rank(\mathbf{T}_{\setS})=\nG$ but $\rank(\mathbf{T})=\nGL-1$, the set $\{\sload~|~\exists\ang,\eqref{eq:slplane}~\text{holds}\}$ has less dimension than $\Real^{\nL}$ and is thereby nowhere dense in $\setsl$.
\footnote{It is due to $\clos(\inte(\setsl))=\clos(\setsl)$. The detailed proof of this equality is omitted due to space limit}
As a result,
\begin{align*}
\setslr\supseteq\setsl\setminus\bigcup\limits_{\gamma\in\Gamma}
\{\sload~|~\exists\ang,\eqref{eq:slplane}~\text{holds for $\gamma$}\}
\end{align*}
must be dense in $\setsl$. Therefore, $\setparar'\subseteq\setparar$ and $\setparar$ is dense in $\setpara$.
\end{proof}

Finally, we have two corollaries of Proposition \ref{limitsdense}.

\begin{corollary}
In Proposition \ref{limitsdense}, $\setsl\setminus\setslr$ can be covered by the union of finitely many subspaces.
\end{corollary}
\begin{corollary}\label{Co:independent}
For any $\sload\in\setslr$, the $\nG-1$ binding inequalities in \eqref{eq:opf1}, along with $\nGL+1$ equality constraints, are independent.
\end{corollary}

 \subsection{Proof of Theorem \ref{Thm:treeTopl}}\label{app:proof_of_Thm_4}
For fixed $u\in[\nL]$, Assumption \ref{A:derivative} shows that there exists $\kappa_u>0$ such that $\forall\omega_u\in(-\kappa_u,\kappa_u)$, $\hatsload:=\sload+\omega_u\base_u$ satisfies 
\begin{align}
\label{eq:setunchanged}
\setSgen(\hatsload)=\setSgen(\sload),~
\setSbra(\hatsload)=\setSbra(\sload).
\end{align}
It is sufficient to show $\OPF(\hatsload)\geq\OPF(\sload)$ when $\omega_u\in(0,\kappa_u)$ for any fixed $u$.\footnote{By symmetry, we will have $\OPF(\hatsload)\leq\OPF(\sload)$ when $\omega_u\in(-\kappa_u,0)$ for any fixed $u$.}

Since $\setSbra\subseteq\EdgeI$, Proposition \ref{limitsdense} implies that $|\setSbra\cap\EdgeI|=|\setSbra|$.
Thereby $\setSbra$ splits $\Graph$ into $|\setSbra|+1$ connected components $\Graph_1,\dots,\Graph_{|\setSbra|+1}$, and each component has vertices $\Vertex_i$ and edges $\Edge_i$.

We first show that 
\begin{align}\label{eq:genDist}
\forall i\in[|\setSbra|+1],~|\Vertex_i\cap([\nG]\backslash\setSgen)|=1.
\end{align}
Since $\cup_{i=1}^{|\setSbra|+1}\Vertex_i=\Vertex\supseteq [\nG]\backslash\setSgen$, and 
\begin{align}
\nonumber
&|[\nG]\backslash\setSgen|=\nG-|\setSgen|\\
\nonumber
=&\nG-(\nG-1-|\setSbra|)=|\setSbra|+1,
\end{align}
if \eqref{eq:genDist} does not hold, then there must exist $i\in[\nG]$ such that $\Vertex_i\cap([\nG]\backslash\setSgen)=\emptyset$ and thus $\Vertex_i\cap[\nG]\subseteq\setSgen$.
Now, for component $\Graph_i$, power flow equations imply that
\begin{align}\label{eq:compDepend}
\sum\limits_{\scriptscriptstyle j\in\Vertex_i\cap\VertexG}\sgen_j-\sum\limits_{\scriptscriptstyle j\in\Vertex_i\cap\VertexL}\sload_{\scriptscriptstyle j-\nG}=
\sum\limits_{\tiny \substack{e:e\in\Edge,\\ \sum\limits_{\mathclap{k\in\Vertex_i}}\CM_{k,e}=1}}\pf_e-
\sum\limits_{\tiny \substack{e:e\in\Edge,\\ \sum\limits_{\mathclap{k\in\Vertex_i}}\CM_{k,e}=-1}}\pf_e.
\end{align}
In \eqref{eq:compDepend}, for $j\in\Vertex_i\cap\VertexG$, we have $\sgen_j\in\{\genllim,\genulim\}$ as $\Vertex_i\cap[\nG]\subseteq\setSgen$.
On the other hand, for $e\in\Edge$ such that $\sum_{k\in\Vertex_i}\CM_{k,e}=\pm 1$, $e$ must be the bridge connecting $\Graph_i$ and some other component, and thereby $e$ is in the cut $\setSbra$.
By definition, we have $\pf_e\in\{\underline{\pflow},\overline{\pflow}\}$. 
Since all the generators and branch power flows involved in \eqref{eq:compDepend} are binding, it contradicts to Corollary \ref{Co:independent} and therefore \eqref{eq:genDist} always holds.

Now let $\loosegen$ be the mapping such that $\loosegen(\Graph_i)$ is the unique generator in $\Vertex_i\cap([\nG]\backslash\setSgen)$ for each $i\in[|\setSbra|+1]$.
For any fixed $u\in[\nL]$ and $\omega_u\in(0,\kappa_u)$, we will prove that $\OPF_v(\hatsload)\geq\OPF_v(\sload)$ for each $v\in[\nG]$ by discussing the following three possible situations that may arise.
Assume $u+\nG\in\Vertex_k$ for $k\in[|\setSbra|+1]$.
\begin{itemize}
\item If $v\in\setSgen(\sload)$, then \eqref{eq:setunchanged} implies $v\in\setSgen(\hatsload)$ as well. 
Since $\OPF$ is continuous over $\omega_u\in(-\kappa_u,\kappa_u)$ and $\genllim<\genulim$, there must be $\OPF_v(\hatsload)=\OPF_v(\sload)$.
\item If $v=\loosegen(\Graph_k)$, then similar to \eqref{eq:compDepend} we have
\begin{align}\label{eq:compare}
\nonumber
&\sum\limits_{\scriptscriptstyle j\in\Vertex_k\cap\setSgen}\OPF_j(\sload)+\OPF_v(\sload)-\sum\limits_{\scriptscriptstyle j\in\Vertex_k\cap\VertexL}\sload_{\scriptscriptstyle j-\nG}\\
\nonumber
=&\sum\limits_{\tiny \substack{e:e\in\Edge,\\ \sum\limits_{\mathclap{l\in\Vertex_k}}\CM_{l,e}=1}}\pf_e-
\sum\limits_{\tiny \substack{e:e\in\Edge,\\ \sum\limits_{\mathclap{l\in\Vertex_k}}\CM_{l,e}=-1}}\pf_e\\
=&\sum\limits_{\scriptscriptstyle j\in\Vertex_k\cap\setSgen}\OPF_j(\hatsload)+\OPF_v(\hatsload)-\sum\limits_{\scriptscriptstyle j\in\Vertex_k\cap\VertexL}\hatsload_{\scriptscriptstyle j-\nG}
\end{align}
As $\sload$ and $\hatsload$ only differ at load $u$ and $\OPF_j(\hatsload)=\OPF_j(\sload)$ for all $j\in\Vertex_k\cap\setSgen$ as shown above, \eqref{eq:compare} can be simplified to
\begin{align}
\nonumber
\OPF_v(\hatsload)-\OPF_v(\sload)=\hatsload_u-\sload_u=\omega_u>0,
\end{align}
and therefore $\OPF_v(\hatsload)>\OPF_v(\sload)$.
\item If $v=\loosegen(\Graph_{k'})$ for some $k'\neq k$, then \eqref{eq:compare} still holds for $\Graph_{k'}$ but $\sload$ and $\hatsload$ are identical for loads in $\Graph_{k'}$.
Hence we have $\OPF_v(\hatsload)=\OPF_v(\sload)$.
\end{itemize}
Putting this together, we conclude that $\OPF_v(\hatsload)\geq\OPF_v(\sload)$ for all $v\in[\nG]$.
\end{proof}

\end{document}